
\documentclass[preprint,12pt,authoryear]{elsarticle}



\usepackage{amssymb, amsmath, amsthm, enumitem, kantlipsum, float, verbatim, subcaption, mathrsfs, bm, cancel}

\usepackage[dvipsnames]{xcolor}
\usepackage[all]{xy} 
\usepackage{tikz}
\usetikzlibrary{trees}

\newcommand{\indep}{\perp \! \! \! \perp}


\newtheorem{theorem}{Theorem}[section]

\newtheorem{lemma}[theorem]{Lemma}
\newtheorem{example}{Example}

\theoremstyle{definition}
\newtheorem{definition}{Definition}[section]

\makeatletter
\newcommand*\bigcdot{\mathpalette\bigcdot@{.5}}
\newcommand*\bigcdot@[2]{\mathbin{\vcenter{\hbox{\scalebox{#2}{$\m@th#1\bullet$}}}}}
\makeatother

\journal{IJAR}

\begin{document}

\begin{frontmatter}



\title{Score Equivalence for Staged Trees}


\author{Conor Hughes\corref{cor1}\fnref{inst1}}
\cortext[cor1]{conor.hughes@warwick.ac.uk}

\affiliation[inst1]{organization={Statistics Department},
            addressline={University of Warwick}, 
            city={Coventry},
            postcode={CV4 7AL},
            country={United Kingdom}}

\author[inst2,inst3]{Peter Strong}
\author[inst1]{Aditi Shenvi}

\affiliation[inst2]{organization={Centre for Complexity Science},
            addressline={University of Warwick}, 
            city={Coventry},
            postcode={CV4 7AL},
            country={United Kingdom}}

\affiliation[inst3]{organization={The Alan Turing Institute},
            addressline={96 Euston Road}, 
            city={London},
            postcode={NW1 2DB},
            country={United Kingdom}}

\begin{abstract}


Staged trees are a recently-developed, powerful family of probabilistic graphical models. An equivalence class of staged trees has now been characterised, and two fundamental statistical operators have been defined to traverse the equivalence class of a given staged tree. Here, two staged trees are said to be \textit{statistically equivalent} when they represent the same set of distributions. Probabilistic graphical models such as staged trees are increasingly being used for causal analyses. Staged trees which are within the same equivalence class can encode very different causal hypotheses but data alone cannot help us distinguish between these. Therefore, in using score-based methods to learn the model structure and distributions from data for causal analyses, we should expect that a suitable scoring function is one which assigns the same score to statistically equivalent models. No scoring function has yet been proven to have this desirable property for staged trees. In this paper, we present a novel Bayesian Dirichlet scoring function based on path uniformity and mass conversation, and prove that this new scoring function is score-equivalent for staged trees.

\end{abstract}

\begin{keyword}
Staged Trees \sep Score Equivalence \sep Model Selection \sep Asymmetric Processes \sep Chain Event Graphs \sep Directed Graphical Models
\PACS 0000 \sep 1111
\MSC 0000 \sep 1111
\end{keyword}

\end{frontmatter}


\section{Introduction}
\label{sec:intro}

Probabilistic graphical models (PGMs) combine together a statistical model that decomposes a complex system into a collection of conditional independence relationships among its defining random variable or events, and a graph which provides a visual representation of this decomposition. Staged trees are a recent family of event-based PGMs (i.e. their building blocks are events\footnote{An event is an element or a subset of elements of the state space of a variable.} rather than random variables) that have been shown to be very powerful, especially for modelling processes which have asymmetries in their evolution. These asymmetries include asymmetric conditional independence relations which are also known as \textit{context-specific conditional independencies} \citep{zhang1999role, collazo18}, and asymmetric structures which arise due to the presence of structural zeros and structural missing values within the process \citep{shenvi2020constructing}. The latter form of asymmetry results in processes whose event spaces do not conform to natural product space structures. In other words, these processes cannot be efficiently described using random variables and instead benefit from the granularity offered by event-based models \citep{thwaites10, shenvi21}. Thus, staged trees generalise and often outperform the popular variable-based PGM family of Bayesian networks by not only being able to effectively model processes without structural asymmetries (using the class of \textit{stratified} staged trees) but also processes with structural asymmetries (using the class of \textit{non-stratified} staged trees); see comparisons in \citet{barclay13} for stratified staged trees and \citet{shenvi2018modelling} for non-stratified staged trees. 

\citet{gorgen18} presented a complete characterisation of the statistical equivalence class of staged trees. We say that two staged tree models are statistically equivalent when they encode the same set of probability distributions. This also implies that they encode the same set of conditional independencies. \citet{gorgen18} also presented two fundamental statistical operators, called the \textit{swap} and \textit{resize} operators, that can be used to traverse the equivalence class of any given staged tree. 

PGMs such as staged trees are increasingly being used for causal analyses (see e.g. \citet{shafer96,thwaites10,cowell14,yu2021causal}). Whilst PGMs are built to leverage conditional independencies, in causal analyses we are generally interested in the directionality of the influence in the dependencies. However, it is essential to note that data alone cannot distinguish between PGMs that encode the same conditional independencies but different causal hypotheses \citep{koller2007graphical}. Therefore, in using score-based algorithms for learning the structure and distributions of a staged tree, we would expect that the scoring function also does not distinguish between models within the same statistical equivalence class given a particular dataset. This is an extremely crucial and desirable property for a scoring function, especially within the context of causal analyses, as it mitigates against spurious causal findings. 

Within a Bayesian framework, the score equivalence property additionally needs to account for the prior. In the case where priors are set such that they distinguish between the models within the same equivalence class, the posterior scores are likely to also be able to distinguish between these models. However, in cases where no such distinguishing prior information is available, we typically want to set the priors in a default way such that it does not discriminate between statistical equivalence class. For Bayesian networks, scores such as the Bayesian Dirichlet equivalent uniform (BDeu) score have been proven to have this score equivalence property \citep{chickering95}. The BDeu comes from the Bayesian Dirichlet family of scoring functions, and in particular, the `equivalent uniform' part of its name defines how this scoring function sets the hyperparameters of the priors. \citet{cowell14} presented a na\"ive extension of the BDeu for staged trees; however, this turns out not to have the property of score equivalence for staged trees. In this paper, we present a new score-equivalent BD scoring metric for staged trees.

We note here that staged tree models are statistically equivalent to chain event graph (CEG) models \citep{smith08, collazo18} which are a related PGM family. Every staged tree can be represented as a CEG and in fact, the mapping from a staged tree to its associated CEG is bijective \citep{shenvi2020constructing}. The key difference between the two is that a CEG provides a compact representation of the staged tree and it has fewer nodes and edges than its staged tree analogue. All the results in this paper extend directly to CEGs. 

The rest of this paper is organised as follows. Section \ref{sec:StagedTrees} reviews staged trees and associated concepts. Section \ref{sec:BDeu} presents our novel BDepu scoring function and proves that it is score-equivalent for staged trees. Section \ref{sec:example} illustrates the practicality of the BDepu in analysing a real-world dataset. Finally, we conclude with a discussion in Section \ref{sec:discussion}.

\section{Staged Trees}
\label{sec:StagedTrees}



Let $\mathcal{T}$ be an event tree, a directed tree graph with a series of edges coming out of a root node representing the unfolding of events. Let $\bm{\theta_{\mathcal{T}}}$ be the collection of conditional transition probability vectors for its nodes. We define the \textit{level} of a node as the number of edges it is away from the root node. A \textit{leaf} $l$ is any node with no outgoing edges and a \textit{situation} $s$ is any non-leaf node. The \emph{floret} of a situation $s$, $F(s)$, is the subgraph of $\mathcal{T}$ induced by $s$ and its children. A \textit{stage} $u$ is defined as a set of situations with equivalent conditional transition probability vectors, and we let the stage structure $\mathcal{U}$ be the collection of stages, which partitions the node set of $\mathcal{T}$. Let $\mathcal{S} = (\mathcal{T},\mathcal{U})$ be a staged tree graph, which is the event tree $\mathcal{T}$ with nodes coloured according to their stage. Therefore, two nodes with the same colour are in the same stage. We define a staged tree model as $(\mathcal{S},\bm{\theta_{\mathcal{S}}})$ where $\mathcal{S}$ is the underlying staged tree graph and $\bm{\theta_{\mathcal{S}}}$ is the collection of conditional transition probability vectors for the stages. We will often use $\mathcal{S}$ and/or staged tree to refer to both the staged tree graph and the model itself.  

Let $\Lambda(\mathcal{S})$ be the set of all root-to-leaf paths in $\mathcal{S}$. For a path $\lambda \in \Lambda(\mathcal{S})$, let $E(\lambda)$ be the edge set of that path. For a situation $s_j$ we denote its $k$th outgoing edge as $e_{jk}$. Define the \textit{path set} of situation $s_j$ as $\Lambda(s_j) = \{\lambda \in \Lambda(\mathcal{S}): \exists \ e_{jk} \in E(\lambda), \text{for some k}\}$. For an edge $e_{jk}$, define the path set as $\Lambda(e_{jk}) = \{\lambda \in \Lambda(\mathcal{S}): \ e_{jk} \in E(\lambda), \text{for some k}\}$.

Given a vector $\mathcal{X}=\{X_1,X_2,\dots,X_n\}$ of variables we define\\ $\mathcal{X}^k=\{X_{1},X_{2},\dots,X_{k}\}$ for $1 \leq k \leq n$ and the state space of variable $X_i$ as $\mathbb{X}_i$.

\begin{definition}[$\mathcal{X}$-compatible]
An event tree $\mathcal{S}$ is $\mathcal{X}$-compatible if its node set $V (\mathcal{S})$ consists of a root node $v_0$ together with a node $v(x^k)$ for each $x^k= (x_{1}, x_{2}, . . . , x_{k})$ where $x_{i}\in \mathbb{X}_{i}$ and $1 \leq k \leq n$.
\end{definition}

\begin{definition}[$\mathcal{X}$-stratified]
    A staged tree is said to be an $\mathcal{X}$ stratified staged tree when its underlying event tree is $\mathcal{X}$-compatible.
\end{definition}

In this paper, we define a staged tree as stratified (as in \cite{cowell14} and \cite{shenvi21}) if it is $\mathcal{X}$-stratified for some $\mathcal{X}$. Note that an alternative definition of stratified staged trees has been used, for details see page 62 of \citet{collazo18}.

In \citet{gorgen18}, and elaborated in \citet{gorgen21}, it is proven that the statistical equivalence class of a staged tree can be traversed by two operators: the \emph{swap} operator, and the \emph{resize} operator. In this paper, just as in \citet{gorgen18}, we restrict our attention to square-free staged trees, in which no two situations on the same root-to-leaf path are in the same stage. \par

The swap operates on a particular type of subtree called a \emph{twin}. We will refer to any leaf in a subtree as a \emph{subleaf}. 

\begin{definition}[Twin \citep{gorgen18}]
A twin around some stage $u$ is the probability subtree $(\mathcal{S}_u,\bm{\theta_{\mathcal{S}_u}}) \subseteq (\mathcal{S},\bm{\theta_\mathcal{S}})$ where all root-to-subleaf paths have exactly two edges, and each child of the root is in the same stage $u$.

\end{definition}


The swap operator reorders the situations in a twin. In Figure \ref{fig:swap_example}, there are two staged trees, with the colours indicating the staging, and the edges labelled with the transition probability parameters. Note that these two staged trees are non-stratified. On the left hand side is $\mathcal{S}$, which has a twin around the stage $u_2 = \{s_1, s_2\}$. On the right hand side is $\mathcal{S}'$, which is the staged tree after a swap is applied to the twin around $u_2$. Note how the situation $s_0'$ is in the same stage as $s_1$ and $s_2$, while $s_1', s_2'$ and $s_3'$ are in the same stage as $s_0$. We point out how the swap repositions some of the situations and thus stages throughout the tree, but ultimately maintains the same set of parameters. The paths in both trees are identical, up to a change of ordering of the events, and so there is a one to one correspondence between the leaves in both trees.

\begin{figure}[!ht]
\centering
\begin{minipage}{0.5 \textwidth}
\centering
\scalebox{0.4}{  
\xymatrixcolsep{5.5pc} \xymatrix{
&&\text{\huge $\mathcal{S}$}&&\\
 &&&\text{\huge $l_1$}\\
 &&\text{\huge  \color{ForestGreen} $s_{3}$} \ar[r]|-{\txt{\huge $\theta_{32}$}}
			        \ar[ur]|-{\txt{\huge $\theta_{31}$}}
 &\text{\huge  $l_2$}\\
  &&&\text{\huge $l_3$}\\
   &\text{\huge \color{red} $s_{1}$} \ar[r]|-{\txt{\huge  $\theta_{22}$}}
			       \ar[uur]|-{\txt{\huge  $\theta_{21}$}}
                      \ar[dr]|-{\txt{\huge $\theta_{23}$}}
&\text{\huge  \color{blue} $s_{4}$}\ar[ur]|-{\txt{\huge  $\theta_{41}$}} 
\ar[r]|-{\txt{\huge  $\theta_{42}$}}&\text{\huge $l_4$}\\
 &&\text{\huge  \color{CarnationPink} $s_{5}$}\ar[r]|-{\txt{\huge  $\theta_{51}$}} 
\ar[dr]|-{\txt{\huge  $\theta_{52}$}}&\text{\huge $l_5$} \\
 &&&\text{\huge  $l_6$}\\
\text{\huge \color{orange} $s_{0}$} \ar[uuur]|-{\txt{\huge  $\theta_{11}$}}
			        \ar[dddr]|-{\txt{\huge $\theta_{12}$}}&&&&\\
&&&\text{\huge  $l_7$} \\
&&\text{\huge \color{ForestGreen} $s_{6}$} \ar[ur]|-{\txt{\huge  $\theta_{31}$}}
			        \ar[r]|-{\txt{\huge $\theta_{32}$}}
 &\text{\huge $l_8$}\\
  &\text{\huge \color{red} $s_{2}$} \ar[ur]|-{\txt{\huge $\theta_{21}$}}
                      \ar[r]|-{\txt{\huge $\theta_{22}$}}
			       \ar[ddr]|-{\txt{\huge $\theta_{23}$}}
          &\text{\huge \color{blue} $s_{7}$}\ar[dr]|-{\txt{\huge  $\theta_{42}$}}
           \ar[r]|-{\txt{\huge  $\theta_{41}$}}&\text{\huge $l_9$}\\
   &&&\text{\huge $l_{10}$}\\
 &&\text{\huge $l_{11}$}&\\
} } 
\end{minipage}\hfill
\begin{minipage}{0.5 \textwidth}
\centering
\scalebox{0.4}{  
\xymatrixcolsep{5.5pc} \xymatrix{
&&\text{\huge $\mathcal{S}'$}&&\\
 &&&\text{\huge $l_1'$}\\
 &&\text{\huge  \color{ForestGreen} $s_{4}'$} \ar[r]|-{\txt{\huge  $\theta_{32}$}}
			        \ar[ur]|-{\txt{\huge  $\theta_{31}$}}
 &\text{\huge $l_2'$}\\
  &\text{\huge \color{orange} $s_{1}'$} \ar[r]|-{\txt{\huge $\theta_{12}$}}
			       \ar[ur]|-{\txt{\huge $\theta_{11}$}}
          &\text{\huge  \color{ForestGreen} $s_{5}'$} \ar[r]|-{\txt{\huge $\theta_{31}$}}
\ar[dr]|-{\txt{\huge $\theta_{32}$}}
          &\text{\huge $l_3'$}\\
   &&&\text{\huge $l_4'$}\\
 &&\text{\huge  \color{blue} $s_{6}'$} \ar[r]|-{\txt{\huge  $\theta_{41}$}}
\ar[dr]|-{\txt{\huge $\theta_{42}$}}&\text{\huge  $l_5'$} \\
 &&&\text{\huge  $l_6'$}\\
\text{\huge \color{red} $s_{0}'$} \ar[uuuur]|-{\txt{\huge $\theta_{21}$}}
                    \ar[r]|-{\txt{\huge  $\theta_{22}$}}
			        \ar[ddddr]|-{\txt{\huge  $\theta_{23}$}}
&\text{\huge \color{orange} $s_{2}'$} \ar[uur]|-{\txt{\huge  $\theta_{11}$}}
			       \ar[ddr]|-{\txt{\huge  $\theta_{12}$}}&&&\\
&&&\text{\huge $l_7'$} \\
&&\text{\huge \color{blue} $s_{7}'$} \ar[ur]|-{\txt{\huge $\theta_{41}$}}
			        \ar[r]|-{\txt{\huge $\theta_{42}$}}
 &\text{\huge  $l_8'$}\\
 &&&\text{\huge $l_9'$} \\
 &\text{\huge \color{orange} $s_{3}'$} \ar[r]|-{\txt{\huge  $\theta_{11}$}}
			       \ar[dr]|-{\txt{\huge  $\theta_{12}$}}
          &\text{\huge \color{CarnationPink} $s_{8}'$} \ar[r]|-{\txt{\huge $\theta_{52}$}}
			        \ar[ur]|-{\txt{\huge  $\theta_{51}$}}
          &\text{\huge $l_{10}'$}\\
 &&\text{\huge $l_{11}'$}
 &\\
} } 
\end{minipage}
\caption{\label{fig:swap_example}The staged trees before and after a swap on the twin around $u_2 = \{s_1, s_2\}$.}
\end{figure}
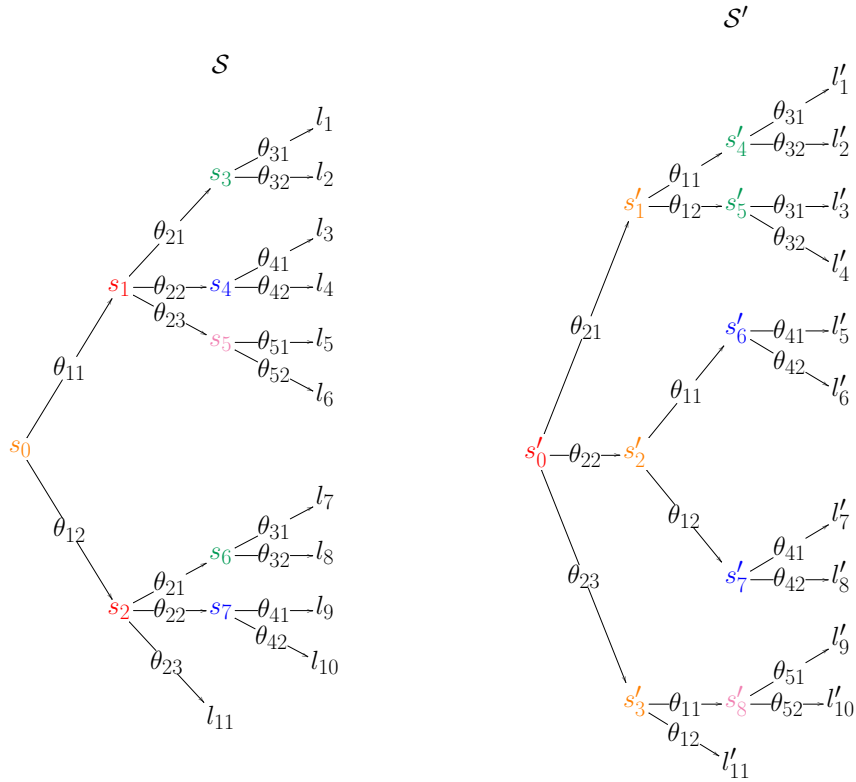

 

The resize operator either contracts subtrees rooted at some node $s$ into a floret, or expands a floret into a larger subtree. In such a floret, each edge corresponds to a path in the subtree and has an associated transition probability given by traversing the root-to-subleaf path in the subtree. This probability is the product of the transition probabilities for each constituent edge of the root-to-subleaf path in the subtree.\par

Based on the literature, there are two cases where the resize operator produces a staged tree \citep{gorgen18}. The first is if the subtree is \emph{saturated}; i.e. every situation in the subtree is in a stage by itself, and so the colour of each situation in the subtree is unique in the tree as a whole. Hence, as the subtree contains no additional parametric information, each root-to-subleaf path can be contracted into a single edge emanating from the root and entering the subleaf. The second case is resizing multiple subtrees that are \emph{conditionally saturated}. That is, each of these subtrees are identical, while the stage colours within each subtree are unique and do not repeat anywhere else outside the subtrees. This case is a generalisation of the first.

We again look at the non-stratified staged tree $\mathcal{S}$ from Figure \ref{fig:swap_example}. The subtree rooted at $s_1$ and containing the paths to $l_1$ to $l_4$ is the same as the subtree rooted at $s_2$ and containing the paths to $l_7$ to $l_{10}$. Meanwhile, the colours in these subtrees do not repeat anywhere inside the subtrees themselves, nor outside them, and so they are conditionally saturated. Hence, the subtrees can be resized as long as we resize both. There are four paths in each subtree, and so the resultant florets will have four edges, with conditional transition probability equal to the product of the transition probabilities along the edges in the subtree. The remaining edges, to $s_5$ and $l_{11}$, will not be changed. The staged tree after applying the resize operator to these subtrees, $\mathcal{S}''$, can be seen alongside $\mathcal{S}$ in Figure \ref{fig:resize_example}.

\begin{figure}[!ht]
\centering
\begin{minipage}{0.5 \textwidth}
\centering
\scalebox{0.4}{  
\xymatrixcolsep{5.5pc} \xymatrix{
&&\text{\huge $\mathcal{S}$}&&\\
 &&&\text{\huge $l_1$}\\
 &&\text{\huge  \color{ForestGreen} $s_{3}$} \ar[r]|-{\txt{\huge $\theta_{32}$}}
			        \ar[ur]|-{\txt{\huge $\theta_{31}$}}
 &\text{\huge  $l_2$}\\
  &&&\text{\huge $l_3$}\\
   &\text{\huge \color{red} $s_{1}$} \ar[r]|-{\txt{\huge  $\theta_{22}$}}
			       \ar[uur]|-{\txt{\huge  $\theta_{21}$}}
                      \ar[dr]|-{\txt{\huge $\theta_{23}$}}
&\text{\huge  \color{blue} $s_{4}$}\ar[ur]|-{\txt{\huge  $\theta_{41}$}} 
\ar[r]|-{\txt{\huge  $\theta_{42}$}}&\text{\huge $l_4$}\\
 &&\text{\huge  \color{CarnationPink} $s_{5}$}\ar[r]|-{\txt{\huge  $\theta_{51}$}} 
\ar[dr]|-{\txt{\huge  $\theta_{52}$}}&\text{\huge $l_5$} \\
 &&&\text{\huge  $l_6$}\\
\text{\huge \color{orange} $s_{0}$} \ar[uuur]|-{\txt{\huge  $\theta_{11}$}}
			        \ar[dddr]|-{\txt{\huge $\theta_{12}$}}&&&&\\
&&&\text{\huge  $l_7$} \\
&&\text{\huge \color{ForestGreen} $s_{6}$} \ar[ur]|-{\txt{\huge  $\theta_{31}$}}
			        \ar[r]|-{\txt{\huge $\theta_{32}$}}
 &\text{\huge $l_8$}\\
  &\text{\huge \color{red} $s_{2}$} \ar[ur]|-{\txt{\huge $\theta_{21}$}}
                      \ar[r]|-{\txt{\huge $\theta_{22}$}}
			       \ar[ddrr]|-{\txt{\huge $\theta_{23}$}}
          &\text{\huge \color{blue} $s_{7}$}\ar[dr]|-{\txt{\huge  $\theta_{42}$}}
           \ar[r]|-{\txt{\huge  $\theta_{41}$}}&\text{\huge $l_9$}\\
   &&&\text{\huge $l_{10}$}\\
 &&&\text{\huge $l_{11}$}&\\
} } 
\end{minipage}\hfill
\begin{minipage}{0.5 \textwidth}
\centering
\scalebox{0.4}{  
\xymatrixcolsep{5.5pc} \xymatrix{
&&\text{\huge $\mathcal{S}''$}&&\\
 &&&\text{\huge $l_1$}\\
 &&&\text{\huge  $l_2$}\\
 &&&\text{\huge  $l_2$}\\
 &\text{\huge \color{gray} $s_{1}'$} \ar[uuurr]|-{\txt{\huge $\theta_{21}\theta_{31}$}}
 \ar[uurr]|-{\txt{\huge  $\theta_{21}\theta_{32}$}}
 \ar[urr]|-{\txt{\huge  $\theta_{22}\theta_{41}$}}
 \ar[rr]|-{\txt{\huge  $\theta_{22}\theta_{42}$}}
 \ar[dr]|-{\txt{\huge   $\theta_{23}$}}
&&\text{\huge $l_4$} \\
 &&\text{\huge  \color{CarnationPink} $s_{5}$}\ar[r]|-{\txt{\huge  $\theta_{51}$}} 
\ar[dr]|-{\txt{\huge  $\theta_{52}$}}&\text{\huge  $l_5$}\\
 &&&\text{\huge  $l_6$}\\
\text{\huge \color{orange} $s_{0}$} \ar[uuur]|-{\txt{\huge  $\theta_{11}$}}
			        \ar[dddr]|-{\txt{\huge $\theta_{12}$}}&&&&\\
&&&\text{\huge  $l_7$} \\
&&&\text{\huge  $l_8$} \\
&\text{\huge \color{gray} $s_{2}'$} \ar[uurr]|-{\txt{\huge  $\theta_{21}\theta_{31}$}}
\ar[urr]|-{\txt{\huge  $\theta_{21}\theta_{32}$}}
\ar[rr]|-{\txt{\huge  $\theta_{22}\theta_{41}$}}
\ar[drr]|-{\txt{\huge  $\theta_{22}\theta_{42}$}}
\ar[ddrr]|-{\txt{\huge  $\theta_{23}$}}
&&\text{\huge $l_9$}\\
 &&&\text{\huge $l_{10}$}\\
 &&&\text{\huge $l_{11}$} \\
} } 
\end{minipage}
\caption{\label{fig:resize_example}The staged trees before and after the conditionally saturated subtrees rooted at $s_1$ and $s_2$ are resized.}
\end{figure}
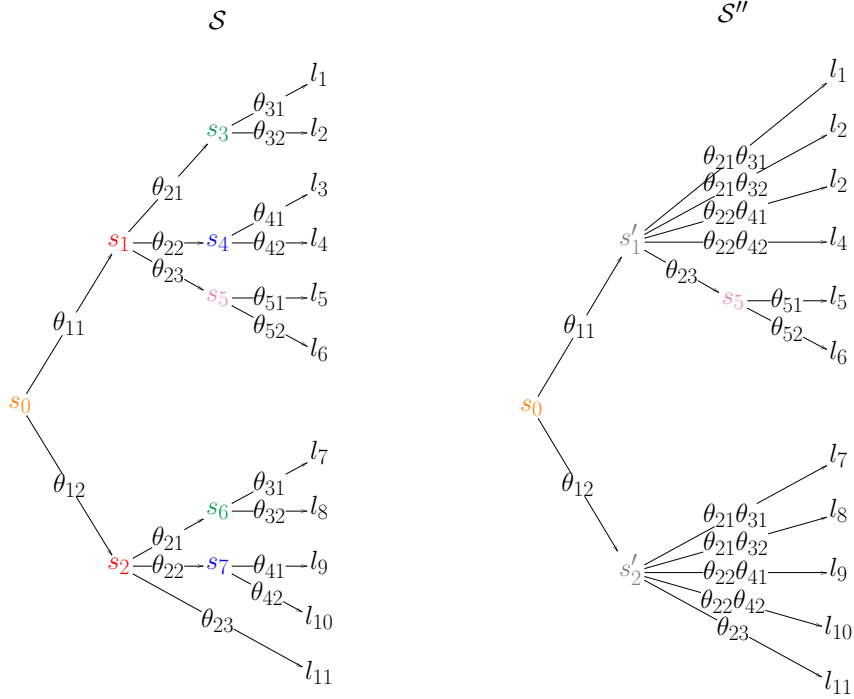


For formal definitions of the swap and resize operators please see \citet{gorgen18} and \citet{gorgen21}. 

\section{Score Equivalence for Staged Trees}
\label{sec:BDeu}


\subsection{BD-Metric} \label{subsec:motivation}

Model selection in staged trees entails clustering of the nodes in its underlying event tree into stages. All current model selection algorithms for staged trees in the literature are score-based and aim to maximise a chosen score function. The most popular of these algorithms is the agglomerative hierarchical clustering (AHC) algorithm \citep{freemansmith11} which is typically paired with the BD-metric as the choice of score function. 

Since the staged tree model aims to reduce the parameter space by leveraging symmetries within its event tree model, any score function for the staged tree will involve calculations over its stages rather than its situations. Consider a staged tree with $J$ stages denoted as $u_j$, for $j = 1, 2, \ldots, J$. Also consider a complete random sample $\mathcal{D} = \{\bm{n}_1, \bm{n}_2, \ldots, \bm{n}_{J}\}$ such that $\bm{n}_j = (n_{j1}, n_{j2}, \ldots, n_{jr_j})$ is the vector of the number of units in the sample that arrive at stage $u_j$ and traverse along one of its $r_j$ edges. \citet{freemansmith11} proved that under the assumptions of complete random sampling and where the priors for the stage parameters $\pmb{\theta}_j$ are independent \textit{a priori} with $\pmb{\theta}_j \sim Dirichlet(\alpha_{j1}, \alpha_{j2}, \ldots, \alpha_{jr_j})$, the stage posteriors follow $Dirichlet(\alpha_{j1} + n_{j1}, \alpha_{j2} + n_{j2}, \ldots, \alpha_{jr_j} + n_{jr_j})$. We also define $\overline{\alpha}_j = \sum_{k=1}^{r_j}\alpha_{jk}$ and $\overline{n}_j = \sum_{k=1}^{r_j}n_{jk}$. Under these assumptions, \citet{barclay13} gives the BD-metric for a staged tree $\mathcal{S}$ with data $\mathcal{D}$ and prior specification $\bm{\alpha}$ as:
\begin{align}
    \text{BD}(\mathcal{S},\mathcal{D};\bm{\alpha}) = \prod_{j=1}^{J}\left[ \frac{\Gamma(\overline{\alpha}_{j})}{\Gamma(\overline{\alpha}_{j}+\overline{n}_{j})} \prod_{k=1}^{r_j} \frac{\Gamma(\alpha_{jk}+n_{jk})}{\Gamma(\alpha_{jk})} \right].
    \label{eqn:bd_score_trees}
\end{align}
\noindent where $\bm{\alpha} = (\bm{\alpha_1},\dots,\bm{\alpha_J})$ and $\bm{\alpha_j} = (\alpha_{j1},\dots,\alpha_{jr_j}).$\par
The BD-metric for BNs takes a similar form to that in Equation \eqref{eqn:bd_score_trees} where the way in which the $\alpha_{jk}$ are set leads to different types of scores with different properties, e.g. the BDeu \citep{heckerman95}, K2 \citep{cooper1991bayesian}, BD sparse \citep{scutari16}. This method of deriving different BD scores through the setting of $\alpha_{jk}$'s can also extend to staged trees. 

\subsection{CS-BDeu} \label{subsec:realted_work}

Our focus in this paper is on deriving a scoring function that is score-equivalent, i.e. gives the same score for staged trees in the same statistical equivalence class. Our approach is to do this through a specific setting of $\alpha_{jk}$ in the BD-metric in Equation \eqref{eqn:bd_score_trees}. For BNs, the BDeu score is score-equivalent. \citet{cowell14} proposed a way of setting $\alpha_{jk}$ in the BD-metric for stratified staged trees that they conjectured resulted in a BDeu-analogue for the class of stratified staged trees. We shall call this the CS-BDeu score.

The CS-BDeu was developed for stratified staged trees where each variable has a representative node on every root-to-leaf path and where the nodes representing a particular variable are all at the same distance from the root. By convention, in stratified staged trees, a stage set can only contain situations that are on the same level. As a result, the CS-BDeu was defined to involve calculations over the stages separately within each level. 

Mirroring the approaches of BNs, an \textit{imaginary sample size} $\alpha$ is set at the root and then propagated uniformly through the tree using the property of mass conservation (see pg 112 of \citet{collazo18}). This mass conservation property plays an important role in our proof of score equivalence, as will be shown later.

However, as the CS-BDeu was only defined for stratified staged trees, the setting of the hyperparameters must be modified to be applicable to any staged tree. The modification we have chosen has the mass conservation property by design. 

We extend the notation introduced before to include levels. Let $i = 1, 2, \ldots I$ index the levels, $j = 1, 2, \ldots, J_i$ index the stages in level $i$, and $k = 1, 2, \ldots, r_{ij}$ index the outgoing edges in the $j$th stage of level $i$. The CS-BDeu for some $\alpha$ is given as
\begin{align}
    \text{CS-BDeu}(\mathcal{S},\mathcal{D};\bm{\alpha}) = \prod_{i=1}^{I}\left[ \prod_{j = 1}^{ J_i} \frac{\Gamma(\dot{\alpha}_{ij})}{\Gamma(\dot{\alpha}_{ij}+\overline{n}_{ij})} \prod_{k=1}^{r_{ij}} \frac{\Gamma(\alpha_{ijk}+n_{ijk})}{\Gamma(\alpha_{ijk})} \right]
    \label{eqn:cs_bdeu}
\end{align}
\noindent where $\alpha_{ijk} = \dot{\alpha}_{ij} /r_{ij}$ and $\dot{\alpha}_{ij}$ is defined as the sum of the hyperparameters of the edges that enter the stage $u_{ij}$, with the convention that the hyperparameter for the root is $\alpha$. Note that $\overline{\alpha}_{ij} = \sum_{k = 1}^{r_{ij}} \alpha_{ijk} = \dot{\alpha}_{ij}$, and so the sum of the hyperparameters that enter the stage equal the sum of the hyperparameters exiting the stage, the mass conservation property. 

The CS-BDeu, when the hyperparameters are set this way, is score-equivalent only for the class of stratified staged trees and not for non-stratified stages trees. We prove the latter below by way of a counter-example. Further, observe that, as shown in Example \ref{ex:csbdeu_counter_example}, the equivalence class of a stratified staged tree also contains non-stratified staged trees, as $\mathcal{S}'$ is stratified and $\mathcal{S}$ is not. 

\begin{example}[Counter example for score equivalence of the CS-BDeu]
\label{ex:csbdeu_counter_example}
Consider the staged trees $\mathcal{S}$ and $\mathcal{S}'$ given in Figure \ref{fig:cs_example}. Both these staged trees belong to the same equivalence class, as $\mathcal{S}'$ can be obtained from $\mathcal{S}$ by using the resize operator at $s_0$. We have that $\mathcal{S}$ is non-stratified whilst $\mathcal{S}'$ is stratified. The prior hyperparameters as specified by the CS-BDeu along the edges of $\mathcal{S}$ and $\mathcal{S}'$ are shown in Figure \ref{fig:cs_example} for some $\alpha > 0$. From this, we can see that the hyperparameter calculation is not consistent for the edges entering the leaves between the two trees and therefore, the CS-BDeu scores for $\mathcal{S}$ and $\mathcal{S}'$ are not equivalent as shown below. 
Assume there are three individuals in the data sample, one along each path. Then:
\footnotesize
\begin{align*}
    \text{CS-BDeu}(\mathcal{S},\mathcal{D};\bm{\alpha}) = &\left[\frac{\Gamma(\alpha)}{\Gamma(\alpha+3)}\frac{\Gamma(\tfrac{\alpha}{2}+2)\Gamma(\tfrac{\alpha}{2}+1)}{\Gamma(\tfrac{\alpha}{2})^{2}}\right]\left[\frac{\Gamma(\tfrac{\alpha}{2})}{\Gamma(\tfrac{\alpha}{2}+2)}\frac{\Gamma(\tfrac{\alpha}{4}+1)^2}{\Gamma(\tfrac{\alpha}{4})^2}\right]\\&\left[\frac{\Gamma(\tfrac{\alpha}{2})}{\Gamma(\tfrac{\alpha}{2}+1)}\frac{\Gamma(\tfrac{\alpha}{2}+1)}{\Gamma(\tfrac{\alpha}{2})}\right]\\
    =&\frac{\Gamma(\alpha)}{\Gamma(\alpha+3)}\frac{\Gamma(\tfrac{\alpha}{2}+1)\Gamma(\tfrac{\alpha}{4}+1)^2}{\Gamma(\tfrac{\alpha}{2})\Gamma(\tfrac{\alpha}{4})^2} \\
    \text{CS-BDeu}(\mathcal{S}',\mathcal{D};\bm{\alpha}) = &\frac{\Gamma(\alpha)}{\Gamma(\alpha+3)}\frac{\Gamma(\tfrac{\alpha}{3}+1)^3}{\Gamma(\tfrac{\alpha}{3})^3} \neq \text{CS-BDeu}(\mathcal{S},\mathcal{D};\bm{\alpha})
\end{align*}
 
\normalsize

\end{example}

\begin{figure}[!ht]
\centering
\begin{minipage}{0.5 \textwidth}
\centering
\scalebox{0.4}{  
\xymatrixcolsep{5.5pc} \xymatrix{
&\text{\huge $\mathcal{S}$}&\\
 &&\text{\huge $l_1$}\\
 &\text{\huge  \color{ForestGreen} $s_{1}$} \ar[r]|-<(0.7){\txt{\huge $\tfrac{\alpha}{4}$}}
			        \ar[ur]|-<(0.3){\txt{\huge $\tfrac{\alpha}{4}$}}
 &\text{\huge  $l_2$}\\
   \text{\huge  $s_{0}$} \ar[ur]|-{\txt{\huge  $\tfrac{\alpha}{2}$}}
                      \ar[dr]|-{\txt{\huge $\tfrac{\alpha}{2}$}}&&\\
&\text{\huge \color{blue} $s_{2}$}\ar[r]|-{\txt{\huge  $\tfrac{\alpha}{2}$}} &\text{\huge $l_3$}\\
} } 
\end{minipage}\hfill
\begin{minipage}{0.5 \textwidth}
\centering
\scalebox{0.4}{  
\xymatrixcolsep{5.5pc} \xymatrix{
&\text{\huge $\mathcal{S}'$}&\\
 &&\text{\huge $l_1$}\\
 \text{\huge  $s_{0}$} \ar[urr]|-<(0.3){\txt{\huge  $\tfrac{\alpha}{3}$}}
                        \ar[rr]|-{\txt{\huge  $\tfrac{\alpha}{3}$}}
                      \ar[drr]|-<(0.7){\txt{\huge $\tfrac{\alpha}{3}$}}
 &&\text{\huge  $l_2$}\\
&&\text{\huge $l_3$}\\
} } 
\end{minipage}
\caption{The staged trees $\mathcal{S}$ and $\mathcal{S}'$ labelled with the edge hyperparameters under the CS-BDeu.}
\label{fig:cs_example}
\end{figure}
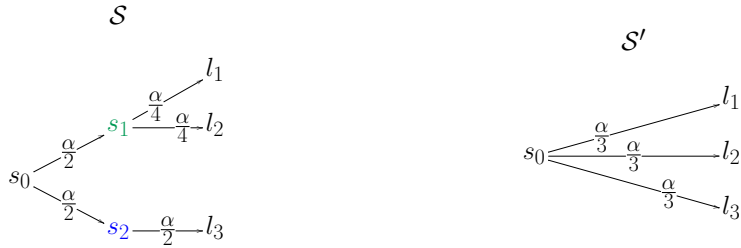

There are other possible options to set the hyperparameters, which we outline in \ref{sec:app:alt_spec}, along with counterexamples for score equivalence. 

\subsection{BDepu} \label{subsec:Bdepu}

We propose an alternative way of setting the $\alpha_{jk}$'s in Equation \eqref{eqn:bd_score_trees} such that it leads to score equivalence for all staged trees, stratified and non-stratified. Note that our proposed BD score is calculated directly over the stages as in Section \ref{subsec:motivation} and therefore, will not involve levels as in Section \ref{subsec:realted_work} and its associated indexing. 

Our approach is centred around the concept of \textit{path uniformity}, where every possible path is assumed equally probable \emph{a priori}. That is, the prior hyperparameters are directly based on the number of paths in the staged tree. Note that \citet{barclayhutton14} discussed the setting of uniform priors on paths albeit not in the context of score equivalence. 

This approach assigns each root-to-leaf path a uniform prior weight and the hyperparameters for each stage and its outgoing edges are based on the number of root-to-leaf paths it is a constituent of. We call this the Bayesian Dirichlet equivalent path uniform (BDepu) prior and it is given by setting $\overline{\alpha_{j}}$ and $\alpha_{jk}$ for $j = 1, 2, \ldots, J$ and $k = 1, 2, \ldots, r_j$ in Equation \eqref{eqn:bd_score_trees} as follows:
\begin{align}
    \alpha_{jk} = \frac{\alpha}{\vert{\Lambda(\mathcal{S})\vert}} \sum_{m = 1}^{h_j} \vert{\Lambda(e_{jk}^{m})\vert} \quad \quad \text{and} \quad \quad \overline{\alpha}_{j} = \sum_{k = 1}^{r_j} \alpha_{jk}
    \label{eqn:BDepu_setting}
\end{align}
\noindent where $h_j$ is the number of situations belonging to stage $u_j$ and $\Lambda(e_{jk}^{m})$ is the path set (see Section \ref{sec:StagedTrees}) of the $k$th outgoing edge of the $m$th situation belonging to stage $u_j$.


Observe that our approach above is in contrast to the CS-BDeu approach of being level-based rather than path-based. In \ref{sec:app:csbdeu} we prove that for stratified staged trees, the CS-BDeu is equivalent to the BDepu. 

Due to the one-to-one correspondence between the leaves and root-to-leaf paths in a staged tree, the BDepu setting is equivalent to assigning an equal prior weight on each leaf and propagating this uniformly backwards through the tree. In this way, we obtain prior hyperparameters for each situation and its outgoing edges. The BD-metric requires hyperparameters over the stages (rather than the situations) which are obtained by summing the hyperparameters for the situations and their outgoing edges which comprise each stage. This is precisely what is done in Statement \eqref{eqn:BDepu_setting}. Thus, the BDepu respects the mass conservation property of \citet{collazo18}.


We will now prove that the BDepu is score-equivalent by proving it is score-equivalent under both the swap and resize operators. 

First, we will assume without generality that the swap operator between two staged trees is acting on only one twin. For brevity, we will also assumed fixed $\alpha$ and $\mathcal{D}$ (up to a relabelling) in each case. 

\setcounter{theorem}{0}
\begin{lemma}
Let $\tau: (\mathcal{S},\theta_{\mathcal{S}}) \rightarrow (\mathcal{S}',\theta_{\mathcal{S}'})$ be a swap operator between staged trees that maps the twin $(\mathcal{S},\theta_{\mathcal{S}})_u$ to $(\mathcal{S}',\theta_{\mathcal{S}'})_u$. Then, BDepu($\mathcal{S}_u$) = BDepu($\mathcal{S}_u'$).\label{lemma1}
\end{lemma}
\begin{proof} 
Let $s_0$ be the root of $\mathcal{S}_u$, which has $r_1$ outgoing edges. As all children of $s_0$ are in the same stage by definition, they must each have an equal number of outgoing edges, in this case $r_2$. The children of $s_0$ can be labelled $s_{0k}, \ k =1,\dots r_1$, while the children of each situation $s_{0k}$ are labelled $s_{kt}, \ t=1,\dots,r_2.$ Each $s_{kt}$ is a subleaf in the twin. Please see Figure \ref{fig:notation_ex} for a demonstration of the notation. Similarly, one can define the outgoing edge from $s_0$ to $s_{0k}$ as $e_{0k}$, and from $s_{0k}$ to $s_{kt}$ as $e_{kt}$. 

\begin{figure}[!ht]
    \centering
    \includegraphics{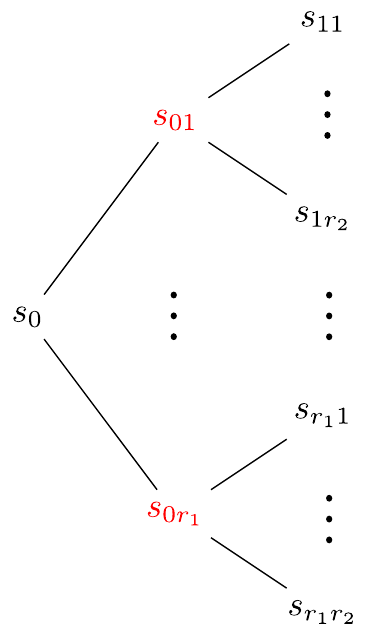}
    \caption{An example of the notation used in Section \ref{subsec:Bdepu}.}
    \label{fig:notation_ex}
\end{figure}

Each edge $e_{kt}, k = 0,1,\dots,r_1, \ t = 1,\dots,r_1$, has a number of descendant leaf nodes in the tree $\mathcal{S}$, which is the cardinality of the path set, $\vert\Lambda(e_{kt})\vert$ = $L_{kt}$. These edges also have corresponding counts in the data, labelled $n_{kt}$. It must be true that for $k \neq 0$
\begin{align}
    L_{0k} = \sum_{t=1}^{r_2}L_{kt}, \ \  
    n_{0k} = \sum_{t=1}^{r_2}n_{kt}\label{eqn:reverse_ind}
\end{align}
because if $e_{kt}$ lies on a root-to-leaf path, so does $e_{0k}$.\par
For $s_0$, we have:
\begin{align*}
    \overline{\alpha}_0 &= \sum_{k=1}^{r_1} \alpha_{0k} = \frac{\alpha}{L}\sum_{k=1}^{r_1} L_{0k} = \frac{\alpha}{L}\sum_{k=1}^{r_1}\sum_{t=1}^{r_2} L_{kt}\\
    \overline{n}_0 &= \sum_{k=1}^{r_1} n_{0k} = \sum_{k=1}^{r_1}\sum_{t=1}^{r_2} n_{kt}
\end{align*}
Each $s_{0k}$ are in the same stage $u$, with $r_2$ outgoing edges. We thus use $\alpha_{ut}, \ t=1,\dots,r_2$, and similarly $n_{ut}$ to represent the hyperparameters and counts for the stage's edges, after summing across all $r_1$ situations. These are:
\begin{align*}
    \alpha_{ut} &= \sum_{k=1}^{r_1} \alpha_{kt} = \frac{\alpha}{L}\sum_{k=1}^{r_1} L_{kt} =  \frac{\alpha}{L} L_{\bigcdot t} = \alpha_{\bigcdot t}\\
    n_{ut} &= \sum_{k=1}^{r_1} n_{kt} = n_{\bigcdot t} 
\end{align*}
Here we use a period ($\bigcdot$) in the subscript to indicate a marginal sum. We point out that $\overline{\alpha}_u = \overline{\alpha}_0$ and $\overline{n}_u = \overline{n}_0$. Thus, the BDepu metric for $\mathcal{S}_u$ is:
\begin{align}
    \text{BDepu}(\mathcal{S}_u) &= \frac{\Gamma(\overline{\alpha}_0)}{\Gamma(\overline{\alpha}_0)+\Gamma(\overline{n}_0)}\left[ \prod_{k=1}^{r_1}\frac{\Gamma(\alpha_{0k} + n_{0k})}{\Gamma(\alpha_{0k})}\right]\frac{\Gamma(\overline{\alpha}_u)}{\Gamma(\overline{\alpha}_u)+\Gamma(\overline{n}_u)}\left[\prod_{t=1}^{r_2}\frac{\Gamma(\alpha_{ut} + n_{ut})}{\Gamma(\alpha_{ut})}\right]\nonumber\\
    &=\left(\frac{\Gamma(\overline{\alpha}_0)}{\Gamma(\overline{\alpha}_0)+\Gamma(\overline{n}_0)}\right)^2\left[ \prod_{k=1}^{r_1}\frac{\Gamma(\alpha_{0k} + n_{0k})}{\Gamma(\alpha_{0k})}\right]\left[\prod_{t=1}^{r_2}\frac{\Gamma(\alpha_{\bigcdot t} + n_{\bigcdot t})}{\Gamma(\alpha_{\bigcdot t})}\right]\label{eqn:bd_swap_t}
\end{align}
In $\mathcal{S'}_u$ the edges are swapped. There are now $r_2$ edges outgoing from the root $s_0'$, with each child having $r_1$ edges with the situations and edges labelled $s_{0k}^{'}$ and $e_{0k}^{'}$ respectively. Similarly, the subleaves and their incoming edges in $\mathcal{S'}_u$ are labelled $s_{kt}^{'}$ and $e_{kt}^{'}$. Each of these edges have counts $n_{0k}^{'}$ or $n_{kt}^{'}$, and subsequent leaf nodes in the tree $\mathcal{S}'$ of $L_{0k}^{'}$ and $L_{kt}^{'}$.\par
There is a relationship between the two trees, and that is:
\begin{align*}
    n_{kt}^{'} &= n_{tk}, \ \ \ 
    L_{kt}^{'} = L_{tk},\ \text{and so,} \\
    \alpha_{kt}^{'} &= \alpha_{tk}
\end{align*}
This is because both subtrees contain the same subleaves, and thus edges entering them. This means that:
\begin{align*}
    n_{0k}^{'} &= \sum_{t=1}^{r_1}n_{kt}^{'} = \sum_{t=1}^{r_1}n_{tk}\\
    &= n_{\bigcdot k}
\end{align*}
and similarly: 
\begin{align*}
    \alpha_{0k}^{'} &= \sum_{t=1}^{r_1}\alpha_{kt}^{'} = \sum_{t=1}^{r_1}\alpha_{tk}\\
    &= \frac{\alpha}{L}\sum_{t=1}^{r_1}L_{tk} = \frac{\alpha}{L}L_{\bigcdot k} = \alpha_{\bigcdot k}
\end{align*}
Now, one can see that:
\begin{align*}
    \overline{\alpha}_0^{'} &= \sum_{k=1}^{r_2} \alpha_{0k}^{'} = \sum_{k=1}^{r_2} \sum_{t=1}^{r_1}\alpha_{kt}^{'} \\
    &= \sum_{k=1}^{r_2} \sum_{t=1}^{r_1}\alpha_{tk} = \sum_{t=1}^{r_1} \sum_{k=1}^{r_2} \alpha_{tk}\\ &= \overline{\alpha}_0
\end{align*}
by switching the indices. This is also true for $\overline{n}_0^{'} = \overline{n}_0$. 

Each child of the root of $\mathcal{S'}_{u}$ is in the same stage $u'$. Summing over the $r_2$ situations, we can calculate the hyperparameters and counts as follows:
\begin{align*}
    \alpha_{ut}^{'} &= \sum_{k=1}^{r_2}\alpha_{kt}^{'} = \sum_{k=1}^{r_2}\alpha_{tk}\\
    &= \alpha_{0t}
\end{align*}
and $n_{ut}^{'} = n_{0t}$, by \eqref{eqn:reverse_ind}. Once again, $\overline{\alpha}_u^{'} = \overline{\alpha}_0^{'} = \overline{\alpha}_0$ and $\overline{n}_u^{'} = \overline{n}_0^{'} = \overline{n}_0$. Hence, the BDepu metric for $\mathcal{S'}_u$ is:
\begin{align*}
    \text{BDepu}(\mathcal{S'}_u) &= \frac{\Gamma(\overline{\alpha}_0^{'})}{\Gamma(\overline{\alpha}_0^{'})+\Gamma(\overline{n}_0^{'})}\left[ \prod_{k=1}^{r_2}\frac{\Gamma(\alpha_{0k}^{'} + n_{0k}^{'})}{\Gamma(\alpha_{0k}^{'})}\right]\frac{\Gamma(\overline{\alpha}_u^{'})}{\Gamma(\overline{\alpha}_u^{'})+\Gamma(\overline{n}_u^{'})}\left[\prod_{t=1}^{r_1}\frac{\Gamma(\alpha_{ut}^{'} + n_{ut}^{'})}{\Gamma(\alpha_{ut}^{'})}\right]\\
    &=\left(\frac{\Gamma(\overline{\alpha}_0)}{\Gamma(\overline{\alpha}_0)+\Gamma(\overline{n}_0)}\right)^2\left[ \prod_{k=1}^{r_2}\frac{\Gamma(\alpha_{\bigcdot k} + n_{\bigcdot k})}{\Gamma(\alpha_{\bigcdot k})}\right]\left[\prod_{t=1}^{r_1}\frac{\Gamma(\alpha_{0 t} + n_{0 t})}{\Gamma(\alpha_{0 t})}\right]\\
    &= \left(\frac{\Gamma(\overline{\alpha}_0)}{\Gamma(\overline{\alpha}_0)+\Gamma(\overline{n}_0)}\right)^2\left[\prod_{t=1}^{r_1}\frac{\Gamma(\alpha_{0t} + n_{0t})}{\Gamma(\alpha_{0t})}\right]\left[ \prod_{k=1}^{r_2}\frac{\Gamma(\alpha_{\bigcdot k} + n_{\bigcdot k})}{\Gamma(\alpha_{\bigcdot k})}\right]
\end{align*}
This is the same as in \eqref{eqn:bd_swap_t}, up to switching $k$ and $t$ in the labels. Therefore BDepu$(\mathcal{S}_u)$=BDepu$(\mathcal{S'}_u)$. \par
\end{proof}
\begin{lemma}
Let $\tau: (\mathcal{S},\theta_{\mathcal{S}}) \rightarrow (\mathcal{S}',\theta_{\mathcal{S}'})$ be a swap operator between staged trees. Then, BDepu($\mathcal{S}$) = BDepu($\mathcal{S}'$).\label{lemma2}
\end{lemma}
\begin{proof}
Assume once again, without loss of generality, that $\tau$ acts on only one twin $\mathcal{S}_u$ in $\mathcal{S}$.

In order to prove $\text{BDepu}(\mathcal{S}) = \text{BDepu}(\mathcal{S'})$, we need to consider the impact of the swap on all situations in $\mathcal{S}$. There of 4 types of situations, those that:
\begin{enumerate}[noitemsep]
    \item Don't lie on any root-to-leaf paths that pass through $\mathcal{S}_u$
    \item Are upstream of $\mathcal{S}_u$ and lie on some root-to-leaf path through $\mathcal{S}_u$
    \item Are in $\mathcal{S}_u$
    \item Are downstream of $\mathcal{S}_u$ and lie on some root-to-leaf path through $\mathcal{S}_u$.
\end{enumerate}
Situations of the first two kinds are unaffected by the swap, and their contribution to the BDepu is unchanged, by independence. In Lemma \ref{lemma1}, we have shown the equivalence between the BDepu contributions for situations in $\mathcal{S}_u$. The only case to show is for those situations that lie on a root-to-leaf path that passed through $\mathcal{S}_u$.\par
Let $\mathcal{L}_u$ be the set of situations in $\mathcal{S}$ that are subleaves $\mathcal{S}_u$. Each of these situations has an edge entering it with some associated count, and that edge (and the situation by extension) represents a unique unfolding of the process to that point. Each subsequent edge for all root-to-leaf paths originating from that situation represents a further evolution of the process. By parameter independence, these edges are independent of the edges upstream. \par
In the swapped subtree $\mathcal{S'}$, these situations and edges still coincide with the same unfolding of the process, and lie at the same distance from the root as before, but may not be in the same location in the tree. Thus, it is akin to reordering the situations at each level that lie on some root-to-leaf path passing through $\mathcal{S}_u$, and equating the florets between the two trees. By the commutativity of multiplication and parameter modularity, the BDepu metric is invariant to such a reordering, and so BDepu($\mathcal{S}$) = BDepu($\mathcal{S'}$). \end{proof}

Proving score equivalence when the swaps operate on a single twin is sufficient because a swap between two trees is simply a composition of such operations.

\begin{lemma}
Let $\kappa: (\mathcal{S},\theta_{\mathcal{S}}) \rightarrow (\mathcal{S}',\theta_{\mathcal{S}'})$ be a resize operator between staged trees that resizes the saturated subtree $(\mathcal{T},\bm{\theta_\mathcal{T}})$ into the floret $(\mathcal{F},\bm{\theta_\mathcal{F}})$. Then, BDepu($\mathcal{T}$) = BDepu($\mathcal{F}$).\label{lemma3}
\end{lemma}
\begin{proof} 
Without loss of generality we assume that $\mathcal{T}$ has at most $2$ levels, as resizes on subtrees with longer paths can be obtained inductively from resizes of subtrees with at most $2$ levels.

Each path $\lambda_i \in \bm{\Lambda}(\mathcal{T})$ corresponds to an edge $e_i$ in $\mathcal{F}$, where the transition probability $\theta(e_i) = \prod_{e \in E(\lambda_i)}\theta(e)$, the probability of the path. 
 Let the number of subleaf nodes in $\mathcal{T}$ be $L_{\mathcal{T}} = \vert \bm{\Lambda}(\mathcal{T})\vert$.
Let there be $r_0$ edges emanating from $s_0$, the root of $\mathcal{T}$. Let each edge $e_{0k}$ have count $n_{0k}$ and hyperparameter $\alpha_{0k}, k = 1,\dots,r_0$. Let $J$ be the number of non-subleaf nodes, the situations, emanating from $s_0$ in $\mathcal{T}$, and so necessarily there are $r_0-J$ paths of length 1 in $\mathcal{T}$. As the subtree is saturated, there are thus $J$ stages at distance 1 from the root, one for each situation. Let each of these singleton stages $u_j = s_j, \ j=1,\dots,J$, have a number of emanating edges $r_j$, with counts $n_{jt}$ and hyperparameters $\alpha_{jt}, t =1,\dots,r_j$. As the tree has a maximum path length of $2$, there are thus $L_{\mathcal{T}} - (r_0-J)$ paths of length 2. This generalisation accounts for the presence of non-stratified trees.\par
As before, the BDepu is calculated stage wise, and so is:
\begin{align}
    \text{BDepu}(\mathcal{T}) = \prod_{j=0}^{J}\left[\frac{\Gamma(\overline{\alpha}_{j})}{\Gamma(\overline{\alpha}_{j}+\overline{n}_{j})} \prod_{k=1}^{r_{j}}\frac{\Gamma(\alpha_{jk}+n_{jk})}{\Gamma(\alpha_{jk})}\right]
    \label{eqn:bd_score_sat}
\end{align}\par
For the paths of length one, which are the edges connecting the root directly to a subleaf, the edge hyperparameters $\alpha_{0k}$, for some $k \in \{1,\dots, r_0\}$, directly contribute to the BDepu.\par
Now, for each $j \in \{1,\dots,J\}$, the non-root singleton stage $u_j$ has some corresponding $i \in \{1,\dots,r_0\}$, the edge entering it, such that $\alpha_{0i} = \overline{\alpha}_j$ and $n_{0i} = \overline{n}_j$. Thus, these terms cancel in the BDepu. Hence, only the outgoing edge hyperparameters and counts for these singleton stages, $\alpha_{jk}$ and $n_{jk}$, contribute to the BDepu. These edges must necessarily enter subleaves. \par
 As such, the edges entering the subleaves are the only edges which contribute to the BDepu. We can label each of these edges as $e_{k}^{'}, \ k=1,\dots,L_\mathcal{T}$, with count $n_{k}^{'}$ and hyperparameter $\alpha_{k}^{'}$. The BDepu for $\mathcal{T}$ is thus:
\begin{align}
    \text{BDepu}(\mathcal{T}) =  \frac{\Gamma(\overline{\alpha}_{0})}{\Gamma(\overline{\alpha}_{0}+\overline{n}_{0})} \prod_{k=1}^{L_\mathcal{T}} \frac{\Gamma(\alpha_{k}^{'}+n_{k}^{'})}{\Gamma(\alpha_{k}^{'})}.
    \label{eqn:bd_score_sat2}
\end{align}
The resized floret $\mathcal{F}$ still has a root $s_{0}$, with $L_\mathcal{T}$ outgoing edges, one for each path. Each of these edges necessarily enters a subleaf, and so has the same count and hyperparameter as the edge entering the subleaf in $\mathcal{T}$. Thus, when the BDepu is calculated, it is equivalent to the expression in \eqref{eqn:bd_score_sat2}, and $\text{BDepu}(\mathcal{T}) = \text{BDepu}(\mathcal{F})$, as required. \par
\end{proof}

The cancellation property between the hyperparameter of the edge entering the singleton stage and the hyperparameter for the singleton stage itself is indicative of the desire for mass conservation. If this property is not in place, these two terms will not necessarily cancel and score equivalence will not hold. Furthermore, the fact the BDepu only accounts for the edges entering the subleaves is why the formulation of the hyperparameters for the CS-BDeu in Section \ref{subsec:realted_work} was also not score-equivalent despite the mass conservation property being in place.

\begin{lemma}
Let $\kappa: (\mathcal{S},\theta_{\mathcal{S}}) \rightarrow (\mathcal{S}',\theta_{\mathcal{S}'})$ be a resize operator between staged trees that resizes the saturated subtree $(\mathcal{T},\bm{\theta_\mathcal{T}})$ into the floret $(\mathcal{F},\bm{\theta_\mathcal{F}})$. Then, BDepu($\mathcal{S}$) = BDepu($\mathcal{S}'$).\label{lemma4}
\end{lemma}
\begin{proof}
In Lemma \ref{lemma3}, we have proven the BDepu contributions for the saturated subtree and the resized floret are equivalent. The argument then follows similarly to Lemma \ref{lemma2}, essentially that due to the fact the BDepu, is calculated stage-wise, the rest of the BDepu is unaffected by the resize. Firstly, as $\mathcal{T}$ is saturated, the stages within it are not repeated throughout the tree. For stages that are upstream of $\mathcal{T}$ or do not lie on any paths that pass through $\mathcal{T}$, their contributions to the BDepu are unaffected. As the prior hyperparameters and counts at the subleaves of the subtrees $\mathcal{T}$ and the florets $\mathcal{F}$ will be the same, the BDepu contibutions for any downstream stages will be the same. Hence, the BDepu is score-equivalent under the resize operator, BDepu($\mathcal{S}$) = BDepu($\mathcal{S}'$).
\end{proof}
We thus have proven the resize operator, when acting on a saturated subtree, is score-equivalent. Now, we suppose the more general form where it operates on several conditionally saturated subtrees.
\begin{lemma}
Let $u$ be a potentially singleton stage in $(\mathcal{S},\theta_{\mathcal{S}})$ with index set $Q_u$ detailing the situations that are in $u$. Suppose each $s_q, \ i \in Q_u$ has a conditionally saturated subtree $(\mathcal{T}_q,\bm{\theta_{\mathcal{T}_q}})$ emanating from it. Let $\kappa: (\mathcal{S},\theta_{\mathcal{S}}) \rightarrow (\mathcal{S}',\theta_{\mathcal{S}'})$ be a resize operator between staged trees that resizes each conditionally saturated subtree $(\mathcal{T}_q,\bm{\theta_{\mathcal{T}_q}})$ into the floret $(\mathcal{F}_q,\bm{\theta_{\mathcal{F}_q}}), \  \forall q \in Q_u$. Then, BDepu($\mathcal{S}$) = BDepu($\mathcal{S}'$).\label{lemma5}
\end{lemma}
\begin{proof}
Each of the subtrees $\mathcal{T}_q$ must necessarily have the same topology and prior hyperparameters on the edges and situations, but with possibly different counts. As the subtrees are conditionally saturated, no stage is repeated inside each $\mathcal{T}_q$, and no stage in $\mathcal{T}_q$ is repeated throughout the rest of $\mathcal{S}$. \par
As the BDepu is calculated stage wise, there is not a BDepu contribution for each $\mathcal{T}_q$, but one contribution which combines them all, as if the collection were a single subtree. We will call this subtree $\mathcal{T}_u$. The hyperparameters and counts for each edge in $\mathcal{T}_u$ will be the sum of each of the corresponding hyperparameters and counts across all trees. Now, as $\mathcal{T}_q$ are conditionally saturated, $\mathcal{T}_u$ is saturated.  \par
 We can also resize each $\mathcal{T}_q$ into a floret $\mathcal{F}_q$, with each root in the same stage. Hence, we combine the florets into one, $\mathcal{F}_u$.\par
 In terms of the BDepu calculation, both $\mathcal{T}_u$ and $\mathcal{F}_u$ behave in the same way as a saturated subtree. Hence, they are subject to the same cancellation properties laid out in Lemma \ref{lemma3}, and we get that BDepu($\mathcal{T}_u$) = BDepu($\mathcal{F}_u$). We then use Lemma \ref{lemma4} to get that BDepu($\mathcal{S}$) = BDepu($\mathcal{S}'$).
\end{proof}

\begin{theorem}
 Suppose $(\mathcal{S},\theta_{\mathcal{S}})$ and $(\mathcal{S}',\theta_{\mathcal{S}'})$ are statistically equivalent staged trees. Then, BDepu($\mathcal{S}$) = BDepu($\mathcal{S}'$).\label{thm1}
 \end{theorem}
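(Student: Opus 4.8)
The plan is to reduce the theorem to the two lemmas just established by invoking the known characterisation of the statistical equivalence class of staged trees. That characterisation says that two staged trees are statistically equivalent if and only if one can be transformed into the other by a finite sequence of the two fundamental operators, the swap operator $\tau$ and the resize operator $\kappa$; equivalently, these operators generate the whole equivalence class. It therefore suffices to show that the BDeu score is invariant under each individual operator and then to propagate this invariance along the connecting sequence.

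First I would record the generating statement precisely. Since $(\mathcal{S},\theta_{\mathcal{S}})$ and $(\mathcal{S}',\theta_{\mathcal{S}'})$ are statistically equivalent, there is a finite chain
\begin{align*}
(\mathcal{S},\theta_{\mathcal{S}}) = (\mathcal{S}_0,\theta_0) &\xrightarrow{\phi_1} (\mathcal{S}_1,\theta_1) \xrightarrow{\phi_2} \cdots \\
&\xrightarrow{\phi_m} (\mathcal{S}_m,\theta_m) = (\mathcal{S}',\theta_{\mathcal{S}'}),
\end{align*}
where each $\phi_i$ is either a swap or a resize, and each intermediate $(\mathcal{S}_i,\theta_i)$ is itself a staged tree in the same equivalence class. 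Because every tree in the chain represents the same set of distributions and is scored against the same data $\mathcal{D}$, the BDeu score is well defined at each step, with the edge counts redistributed exactly as described in the proofs of Lemmas \ref{lemma1} and \ref{lemma2}.

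Next I would apply the lemmas step by step. By Lemma \ref{lemma1}, if $\phi_i$ is a swap then $\text{BDeu}(\mathcal{S}_{i-1}) = \text{BDeu}(\mathcal{S}_i)$; by Lemma \ref{lemma2}, the same equality holds if $\phi_i$ is a resize. A straightforward induction on the length $m$ of the chain then yields $\text{BDeu}(\mathcal{S}) = \text{BDeu}(\mathcal{S}_1) = \cdots = \text{BDeu}(\mathcal{S}')$, which is the desired conclusion.

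The main obstacle is not the induction, which is immediate, but confirming that the two lemmas genuinely cover every elementary move required to traverse the class. The lemmas are stated for a single twin and for a single (conditionally) saturated subtree of depth at most two, respectively, whereas the characterisation may phrase the operators more generally; I would therefore need to check that any simultaneous swap or resize decomposes into the single-twin and depth-at-most-two cases already handled, and that the mass conservation property invoked earlier persists at each intermediate tree so that the hyperparameters remain consistently defined throughout the chain. Verifying this compatibility between the local statements of Lemmas \ref{lemma1} and \ref{lemma2} and the global generating set is the crux; once it is in place the theorem follows.
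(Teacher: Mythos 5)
Your proposal is correct and matches the paper's own argument: the paper likewise cites the characterisation in \citet{gorgen18} that statistical equivalence holds exactly when the trees are connected by a finite composition of swaps and resizes, and then applies Lemmas \ref{lemma1} and \ref{lemma2} along that composition. Your added care about decomposing general moves into the single-twin and depth-at-most-two cases is a reasonable refinement, but the core reduction is identical to the paper's.
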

 \begin{proof} By Lemmas \ref{lemma2} and \ref{lemma5}, the BDepu metric is invariant to swaps and resizes. In \citet{gorgen18}, it is proven that two staged trees are statistically equivalent if and only if the map between them is a finite composition of swaps and resizes. Hence, the BDepu metric is score-equivalent, and BDepu($\mathcal{S}$) = BDepu($\mathcal{S'}$). 
\end{proof}

Furthermore, following from our proof that the BDepu and CS-BDeu are equivalent on stratified staged trees in \ref{sec:app:csbdeu}, we claim the CS-BDeu is score-equivalent for two staged trees belonging to the same statistical equivalence class if and only if they are both stratified. 

\section{Analysing a Real World Dataset}
The \emph{Titanic} data set contains information on the fates of passengers on the ocean liner Titanic, summarised according to economic status, age, sex, and whether they survived or not. This data set has previously been analysed by \citet{stagedtrees20} for staged trees. We will use it to demonstrate the swap and resize operators, and how the BDepu is score-equivalent under both.\par
There are 3 covariates chosen, alongside the response variable of whether an individual survived or not. We call this response Survival, with values of Yes or No. The three covariates, in the order they appear in the tree, are:
\begin{itemize}
\itemsep0em
    \item Role: Crew or Passenger
    \item Sex: Male or Female
    \item Age: Adult or Child
\end{itemize}
Notably, there are no children in the crew, and so this brings structural zeroes into the data set. We will represent this by deleting the edge associated to Child in the event tree for those who are Crew. This means the event tree is not $\mathcal{X}$-compatible, where $\mathcal{X} = ($Role, Sex, Age, Survival$)$, and so the staged tree is non-stratified. \par
We then embellish the event tree with colours to create a staged tree, where two situations of the same colour are in the same stage. In order to decide on this stage structure, we propose two conditional independence statements that will dictate the resultant staging. The staged tree after these conditional independence statements are asserted, $\mathcal{S}_1$, can be seen in Figure \ref{fig:tree_1}, with edge counts in parentheses.\par
First, we will assume Age and Sex are independent, given the Role is known. That is, $\text{Age} \ \indep \ \text{Sex} \ \vert \ \text{Role}$. For Crew members, this is trivial, as they must necessarily be Adults, but for Passengers this creates a stage for $s_5$ and $s_6$.\par
The second conditional independence statement will be based on the fact that women and children were given access to the lifeboats. That is, children of both sexes were given equal access, which was not the case for adults. As such, we have the context-specific conditional independence statement that $\text{Survival} \ \indep \ \text{Sex} \ \vert \ \text{Role}, \ \text{Age = Child}$. This means $s_{10}$ and $s_{12}$ are in the same stage. Note that the inclusion of role here is superfluous, as all children must be Passengers. \par
\label{sec:example}
\begin{figure}[!ht]
   \centering
\scalebox{0.4}{  
\xymatrixcolsep{11pc} \xymatrix{
\textbf{\huge Role}&\textbf{\huge Sex}&\textbf{\huge Age}&\textbf{\huge Survival}\\
 &&\text{\huge \color{red} $s_{3}$} \ar[r]|-{\txt{\huge Adult (862)}}
 &\text{\huge \color{Gray} $s_{7}$}  \ar[dr]|-{\txt{\huge  No (670)}}
			        \ar[r]|-{\txt{\huge Yes (192)}}
 &\text{\huge $l_{1}$}\\
 &&&&\text{\huge $l_{2}$}\\
 &\text{\huge \color{orange} $s_{1}$} \ar[uur]|-{\txt{\huge   Male (862)}}
			       \ar[ddr]|-{\txt{\huge Female (23)}}&&&\\
 &&&&\text{\huge $l_{3}$}\\
 &&\text{\huge \color{red} $s_{4}$} \ar[r]|-{\txt{\huge Adult (23)}}
 &\text{\huge \color{Cyan} $s_{8}$}  \ar[r]|-{\txt{\huge  No (3)}}
			        \ar[ur]|-{\txt{\huge  Yes (20)}}
 &\text{\huge $l_{4}$}\\
\text{\huge $s_{0}$} \ar[uuur]|-{\txt{\huge Crew (885)}}
			        \ar[dddr]|-{\txt{\huge Passenger (1316)}}&&&&\\
 &&&&\text{\huge $l_{5}$}\\
 &&&\text{\huge \color{Goldenrod} $s_{9}$} \ar[ur]|-{\txt{\huge  Yes (146)}}
				\ar[r]|-{\txt{\huge No (659)}}
 &\text{\huge $l_{6}$}\\
&\text{\huge \color{ForestGreen} $s_{2}$} \ar[r]|-{\txt{\huge Male (869)}}
			       \ar[dddr]|-{\txt{\huge  Female (447)}}
&\text{\huge \color{blue} $s_{5}$} \ar[ur]|-{\txt{\huge  Adult (805)}}
			        \ar[r]|-{\txt{\huge  Child (64)}}
 &\text{\huge \color{CarnationPink} $s_{10}$}  \ar[r]|-{\txt{\huge  Yes (29)}}
			        \ar[dr]|-{\txt{\huge No (35)}}
 &\text{\huge $l_{7}$}\\
 &&&&\text{\huge $l_{8}$}\\
 &&&&\text{\huge $l_{9}$}\\
 &&\text{\huge \color{blue} $s_{6}$} \ar[dr]|-{\txt{\huge Child (45)}}
			        \ar[r]|-{\txt{\huge Adult (402)}}
 &\text{\huge \color{LimeGreen} $s_{11}$}  \ar[r]|-{\txt{\huge  No (106)}}
			        \ar[ur]|-{\txt{\huge Yes (296)}}
 &\text{\huge $l_{10}$}\\
 &&&\text{\huge \color{CarnationPink} $s_{12}$} \ar[r]|-{\txt{\huge Yes (28)}}
				\ar[dr]|-{\txt{\huge No (17)}}
 &\text{\huge $l_{11}$}\\
 &&&&\text{\huge $l_{12}$}\\
} }
\caption{\label{fig:tree_1} The staged tree $\mathcal{S}_1$ for the Titanic data set with edge counts in parentheses.}
\end{figure}
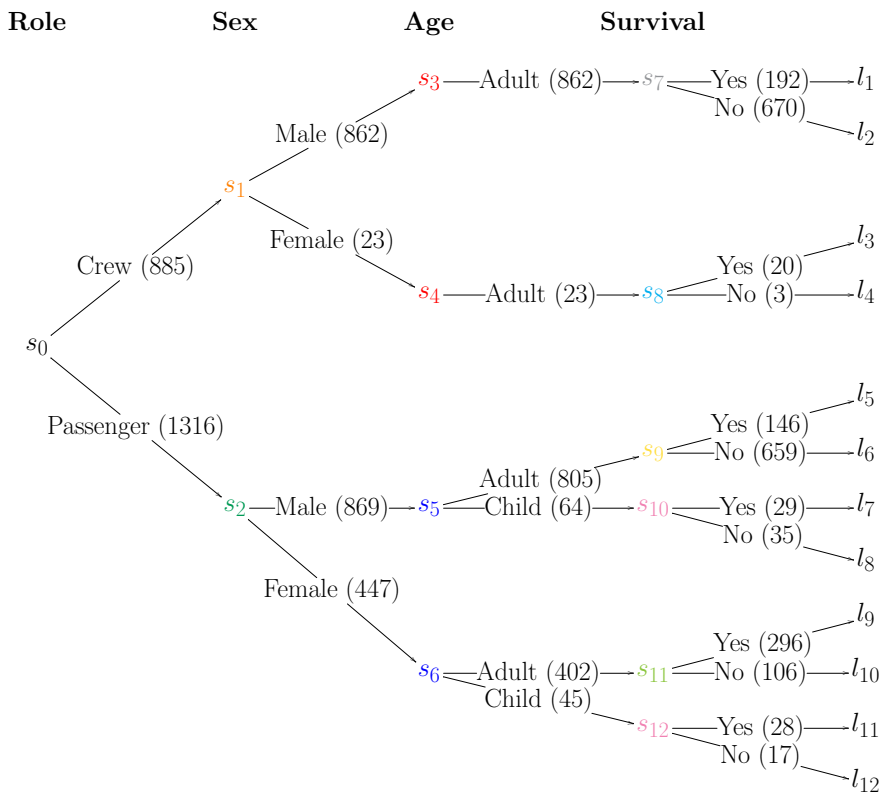
Now, we can see that there are twins around stages $u_{4} = \{s_3, s_4\}$ and $u_{5} = \{s_5, s_6\}$. The swap operator can be applied to both of these twins, which means all situations associated with Age and Sex will be swapped, a level swap \citep{gorgen18}. This corresponds to a reordering of the variables in the tree, and thus a different causal interpretation of the process. We perform these two swap operators to create a statistically equivalent tree, $\mathcal{S}_2$, which can be seen in Figure \ref{fig:tree_2}. \par
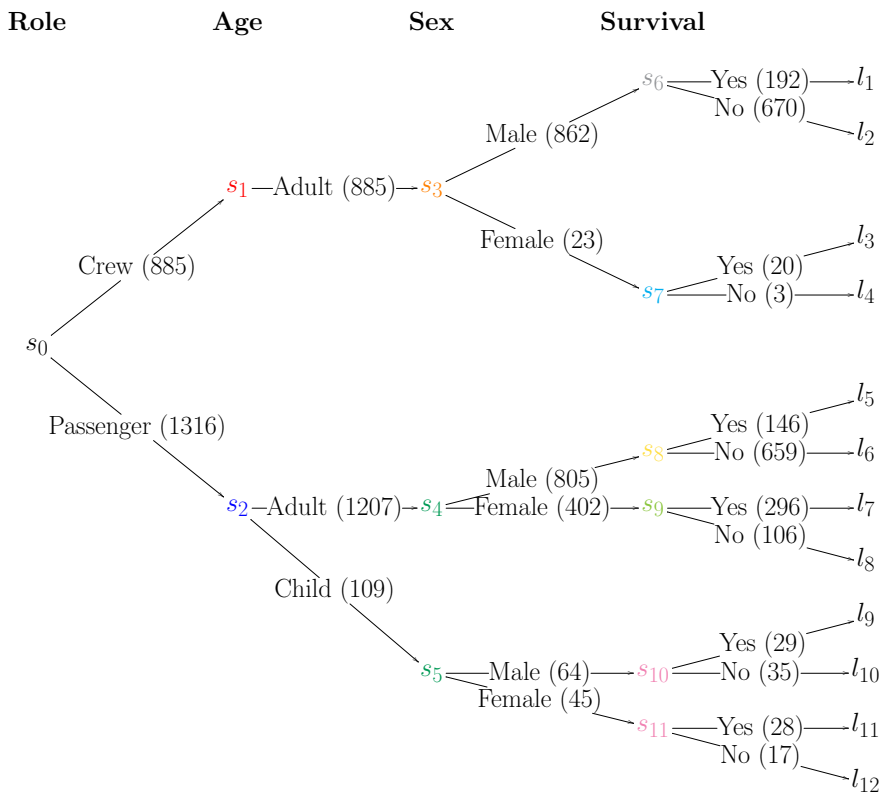
\begin{figure}[!ht]
   \centering
\scalebox{0.4}{  
\xymatrixcolsep{11pc} \xymatrix{
\textbf{\huge Role}&\textbf{\huge Age}&\textbf{\huge Sex}&\textbf{\huge Survival}\\
 &&&\text{\huge \color{Gray} $s_{6}$}  \ar[dr]|-{\txt{\huge No (670)}}
			        \ar[r]|-{\txt{\huge Yes (192)}}
 &\text{\huge $l_{1}$}\\
 &&&&\text{\huge $l_{2}$}\\
 &\text{\huge \color{red} $s_{1}$} \ar[r]|-{\txt{\huge  Adult (885)}}&
 \text{\huge \color{orange} $s_{3}$} \ar[uur]|-{\txt{\huge  Male (862)}}
 \ar[ddr]|-{\txt{\huge Female (23)}}&&\\
 &&&&\text{\huge $l_{3}$}\\
 &&&\text{\huge \color{Cyan} $s_{7}$}  \ar[r]|-{\txt{\huge   No (3)}}
			        \ar[ur]|-{\txt{\huge Yes (20)}}
 &\text{\huge $l_{4}$}\\
\text{\huge $s_{0}$} \ar[uuur]|-{\txt{\huge Crew (885)}}
			        \ar[dddr]|-{\txt{\huge Passenger (1316)}}&&&&\\
 &&&&\text{\huge $l_{5}$}\\
 &&&\text{\huge \color{Goldenrod} $s_{8}$} \ar[ur]|-{\txt{\huge  Yes (146)}}
				\ar[r]|-{\txt{\huge No (659)}}
 &\text{\huge $l_{6}$}\\
&\text{\huge \color{blue} $s_{2}$} \ar[r]|-{\txt{\huge  Adult (1207)}}
			       \ar[dddr]|-{\txt{\huge  Child (109)}}
&\text{\huge \color{ForestGreen} $s_{4}$} \ar[ur]|-{\txt{\huge  Male (805)}}
			        \ar[r]|-{\txt{\huge Female (402)}}
 &\text{\huge \color{LimeGreen} $s_{9}$}  \ar[r]|-{\txt{\huge   Yes (296)}}
			        \ar[dr]|-{\txt{\huge  No (106)}}
 &\text{\huge $l_{7}$}\\
 &&&&\text{\huge $l_{8}$}\\
 &&&&\text{\huge $l_{9}$}\\
 &&\text{\huge \color{ForestGreen} $s_{5}$} \ar[dr]|-{\txt{\huge Female (45)}}
			        \ar[r]|-{\txt{\huge Male (64)}}
 &\text{\huge \color{CarnationPink} $s_{10}$}  \ar[r]|-{\txt{\huge   No (35)}}
			        \ar[ur]|-{\txt{\huge   Yes (29)}}
 &\text{\huge $l_{10}$}\\
 &&&\text{\huge \color{CarnationPink} $s_{11}$} \ar[r]|-{\txt{\huge  Yes (28)}}
				\ar[dr]|-{\txt{\huge  No (17)}}
 &\text{\huge $l_{11}$}\\
 &&&&\text{\huge $l_{12}$}\\
} }
\caption{\label{fig:tree_2} The staged tree $\mathcal{S}_2$ for the Titanic data set after performing a swap on $\mathcal{S}_1$.}
\end{figure}
In $\mathcal{S}_2$, we can see that the subtree rooted at $s_1$ is saturated, and so a resize can be applied. As there are four paths in this subtree, the resultant floret will have four outgoing edges, with each edge representing a unique combination of Sex and Survival. The staged tree $\mathcal{S}_3$ created after performing this resize on $\mathcal{S}_2$ can be seen in Figure \ref{fig:tree_3}. We note that now each level of the tree can no longer be associated with a specific covariate, even after accounting for the structural zeroes. We also point out that the edge counts for the resized floret rooted at $s_1$ match the edge counts entering the leaves $l_1 - l_4$ in $\mathcal{S}_2$. This exemplifies the cancellation property seen in the proof of score equivalence for the resize operator in Lemma \ref{lemma3}, where the only edge counts that are important are the ones entering the leaves. \par
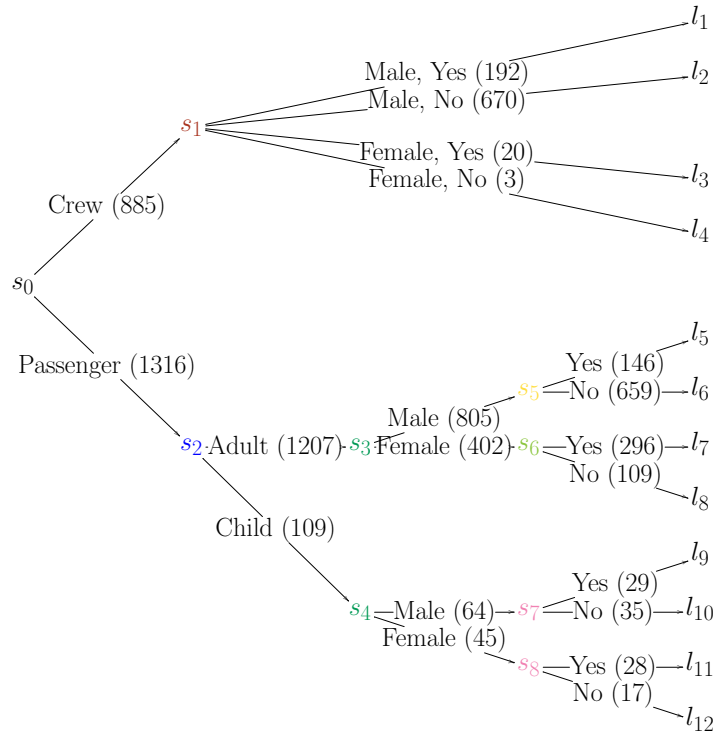
\begin{figure}[!ht]
   \centering
\scalebox{0.4}{  
\xymatrixcolsep{11pc} \xymatrix{
 &&&&\text{\huge $l_{1}$}\\
 &&&&\text{\huge $l_{2}$}\\
 &\text{\huge \color{Mahogany} $s_{1}$} \ar[uurrr]|-{\txt{\huge    Male, Yes (192)}}
 \ar[urrr]|-{\txt{\huge   Male, No (670)}}
 \ar[drrr]|-{\txt{\huge   Female, Yes (20)}}
 \ar[ddrrr]|-{\txt{\huge   Female, No (3)}}&&&\\
 &&&&\text{\huge $l_{3}$}\\
 &&&&\text{\huge $l_{4}$}\\
\text{\huge $s_{0}$} \ar[uuur]|-{\txt{\huge Crew (885)}}
			        \ar[dddr]|-{\txt{\huge Passenger (1316)}}&&&&\\
 &&&&\text{\huge $l_{5}$}\\
 &&&\text{\huge \color{Goldenrod} $s_{5}$} \ar[ur]|-{\txt{\huge   Yes (146)}}
				\ar[r]|-{\txt{\huge  No (659)}}
 &\text{\huge $l_{6}$}\\
&\text{\huge \color{blue} $s_{2}$} \ar[r]|-{\txt{\huge Adult (1207)}}
			       \ar[dddr]|-{\txt{\huge  Child (109)}}
&\text{\huge \color{ForestGreen} $s_{3}$} \ar[ur]|-{\txt{\huge   Male (805)}}
			        \ar[r]|-{\txt{\huge Female (402)}}
 &\text{\huge \color{LimeGreen} $s_{6}$}  \ar[r]|-{\txt{\huge   Yes (296)}}
			        \ar[dr]|-{\txt{\huge  No (109)}}
 &\text{\huge $l_{7}$}\\
 &&&&\text{\huge $l_{8}$}\\
 &&&&\text{\huge $l_{9}$}\\
 &&\text{\huge \color{ForestGreen} $s_{4}$} \ar[dr]|-{\txt{\huge  Female (45)}}
			        \ar[r]|-{\txt{\huge Male (64)}}
 &\text{\huge \color{CarnationPink} $s_{7}$}  \ar[r]|-{\txt{\huge No (35)}}
			        \ar[ur]|-{\txt{\huge Yes (29)}}
 &\text{\huge $l_{10}$}\\
 &&&\text{\huge \color{CarnationPink} $s_{8}$} \ar[r]|-{\txt{\huge Yes (28)}}
				\ar[dr]|-{\txt{\huge No (17)}}
 &\text{\huge $l_{11}$}\\
 &&&&\text{\huge $l_{12}$}\\
} }
\caption{\label{fig:tree_3} The staged tree $\mathcal{S}_3$ for the Titanic data set after performing a resize on $\mathcal{S}_2$.}
\end{figure}
We will now calculate the BDepu for each model. The prior hyperparameters are calculated according to \eqref{eqn:BDepu_setting}, which requires the imaginary sample size $\alpha$ to be specified. 
As this is purely a demonstration of score equivalence, we will choose $\alpha = 12$, the number of leaves, so as to avoid fractions in the calculation. In Section \ref{sec:discussion}, we will discuss choices of $\alpha$.\par
Above, we showed that for $u_1 = \{s_0\}$, we have that $\alpha_{11} = 4$ and $\alpha_{12} = 8$. For $u_4 = \{s_3, s_4\}$, we have that there is only one hyperparameter, $\alpha_{41} = 4$. For $u_5 = \{s_5, s_6\}$, we have two hyperparameters, with $\alpha_{51} = \alpha_{52} = 4$. We have placed all three BDepu calculations in \ref{sec:app:calcs}. \par
When we inspect BDepu($\mathcal{S}_1$) and BDepu($\mathcal{S}_2$), \eqref{eqn:bd_s_1} and \eqref{eqn:bd_s_2} respectively, we can see that they are the same, except with some terms rearranged. This highlights the idea that a swap simply rearranges the terms in the BDepu, without materially affecting the calculation.\par
With regard to the BDepu for $\mathcal{S}_3$, the contributions from the parts of the tree that have not been resized are the same as in $\mathcal{S}_2$. The focus is then on the subtree that was resized, and its contribution in both $\mathcal{S}_2$ and $\mathcal{S}_3$. We can see from \eqref{eqn:bd_s_3} that the contribution for the resized subtree and the subsequent floret are the same, after cancellation. Hence, we have that BDepu($\mathcal{S}_1$) = BDepu($\mathcal{S}_2$)=BDepu($\mathcal{S}_3$).

\section{Discussion}
\label{sec:discussion}
In this paper we have presented the BDepu, based on the BD-metric, which is the first score-equivalent scoring function defined for staged trees. The BDepu for staged trees is analogous to the score-equivalent BDeu for Bayesian networks. Nonetheless, the methods in this paper were all centred on a default prior specification, and there are still avenues for research into the impacts of how this prior is specified, and implications of specifying more informative priors. 

In order to set this default prior, an imaginary sample size $\alpha$ must be chosen. Prominent methods for setting $\alpha$ include those of \citet{neap03} for BNs, who claimed that the maximum number of categories for a variable is the theoretical minimum number of observations needed in a data set, and thus the minimum confidence that should be given to the prior. However, we believe this approach has issues when applied to staged trees, non-stratified in particular, even when the number of outgoing edges from each situation is considered analogous to the number of categories for an associated variable. In fact, we claim that any default way of setting this imaginary sample size can have undesirable properties. The MAP structure chosen can be highly sensitive to the choice of $\alpha$, and this is shown for BNs in \citet{silander07sensi}. Hence, a sensitivity analysis should always be performed to investigate the effects of the choice of $\alpha$, rather than settling for one heuristic.

The BDepu metric is a specific form of the BD-metric, and is one instance where the BD-metric is score equivalent for staged trees. It is also of interest to explore the general properties of score-equivalent metrics for staged trees. Notably, the BDeu metric for BNs is a special case of the the BDe metric \citep{heckerman95}, where the BDe metric apportions $\alpha$ according to some pre-specified vector of prior probabilities, rather than uniformly. For BNs, the BDe metric has also been proven to be score-equivalent \citep{chickering95}, and has an analogous form for staged trees, similar to the BDepu, where $\alpha$ is instead distributed across each path based on a prior probability of traversing that path. However, just as the BDeu is more commonly used for BNs than the BDe, we expect that the BDepu would be more applicable generally than such an informative approach, as such prior probabilities may not be readily available. If they were available, this would often be in cooperation with a domain expert, who may feel more comfortable placing informative priors on the stages rather than the paths, as it is more natural to consider the conditional probabilities. These individual stage priors can also have their own measures of confidence.

If informative priors on the stages are asserted, as the BD-metric is calculated stage wise, it would appear sensible to ignore the paths and just use the BD-metric itself. Notably, a key point of the proof of score equivalence provided in this paper is centred on mass conservation. Hence, we conjecture the metric will not be score-equivalent in general unless the prior conserves mass. As such, informative priors would need to respect the causal ordering of the tree in order to be mass conserving and thus score-equivalent. Therefore, setting informative priors that are not mass conserving could be seen as containing causal information, with respect to the BD-metric. Recalling the fact that mass conservation was needed only for the resize operator, further thought should be given to the meaning of an informative prior after resizing.   



\appendix

\section{CS-BDeu Alternative Specifications}\label{sec:app:alt_spec}

We previously outlined in Section \ref{sec:app:csbdeu} one possible modification to the hyperparameters in the CS-BDeu that could be applied to all staged trees. There are two other options that could also be arrived at, but neither is score-equivalent for non-stratified staged trees, as we will show with some counter-examples.

For the first option, at each level, we distribute the imaginary sample size $\alpha$ amongst each of the situations, and then at each situation distribute this weight across its outgoing edges. We focus again on the staged trees in Figure \ref{fig:cs_example}, and note that this method of setting the hyperparameters is equivalent to the specification laid out previously. As shown in that example, the CS-BDeu is not score-equivalent, and so this method of setting the hyperparameters will not be score-equivalent either. We also point out that this method conserves mass in this example, but this will not be true in general. 

The second option is to instead, at each level, distribute $\alpha$ amongst the outgoing edges, and then the hyperparameter for a situation will be the sum of its outgoing edges. We revisit again Example \ref{ex:csbdeu_counter_example}, where this way of setting the hyperparameters can be seen in Figure \ref{fig:cs_example2}. As can be seen, the hyperparameters entering the leaves between both trees is now the same, unlike with the two other approaches. However, because $\alpha$ has not been backpropagated through the whole tree, only from edge to its origin situation, the mass conservation property does not hold. In \eqref{eqn:cs_calc}, we see that CS-BDeu$(\mathcal{S},\mathcal{D};\bm{\alpha}) \neq $CS-BDeu$(\mathcal{S}',\mathcal{D};\bm{\alpha})$. 

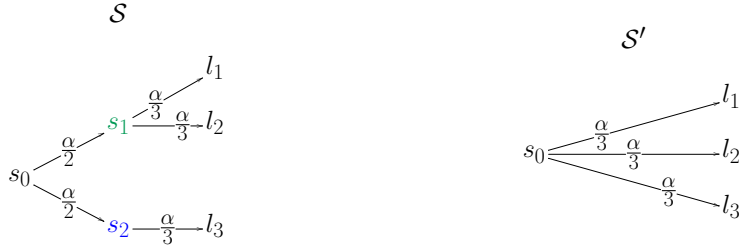
\begin{figure}[!ht]
\centering
\begin{minipage}{0.5 \textwidth}
\centering
\scalebox{0.4}{  
\xymatrixcolsep{5.5pc} \xymatrix{
&\text{\huge $\mathcal{S}$}&\\
 &&\text{\huge $l_1$}\\
 &\text{\huge  \color{ForestGreen} $s_{1}$} \ar[r]|-<(0.7){\txt{\huge $\tfrac{\alpha}{3}$}}
			        \ar[ur]|-<(0.3){\txt{\huge $\tfrac{\alpha}{3}$}}
 &\text{\huge  $l_2$}\\
   \text{\huge  $s_{0}$} \ar[ur]|-{\txt{\huge  $\tfrac{\alpha}{2}$}}
                      \ar[dr]|-{\txt{\huge $\tfrac{\alpha}{2}$}}&&\\
&\text{\huge \color{blue} $s_{2}$}\ar[r]|-{\txt{\huge  $\tfrac{\alpha}{3}$}} &\text{\huge $l_3$}\\
} } 
\end{minipage}\hfill
\begin{minipage}{0.5 \textwidth}
\centering
\scalebox{0.4}{  
\xymatrixcolsep{5.5pc} \xymatrix{
&\text{\huge $\mathcal{S}'$}&\\
 &&\text{\huge $l_1$}\\
 \text{\huge  $s_{0}$} \ar[urr]|-<(0.3){\txt{\huge  $\tfrac{\alpha}{3}$}}
                        \ar[rr]|-{\txt{\huge  $\tfrac{\alpha}{3}$}}
                      \ar[drr]|-<(0.7){\txt{\huge $\tfrac{\alpha}{3}$}}
 &&\text{\huge  $l_2$}\\
&&\text{\huge $l_3$}\\
} } 
\end{minipage}
\caption{The staged trees $\mathcal{S}$ and $\mathcal{S}'$ labelled with the edge hyperparameters under the CS-BDeu.}
\label{fig:cs_example2}
\end{figure}

\footnotesize
\begin{align}
    \text{CS-BDeu}(\mathcal{S},\mathcal{D};\bm{\alpha}) = &\left[\frac{\Gamma(\alpha)}{\Gamma(\alpha+3)}\frac{\Gamma(\tfrac{\alpha}{2}+2)\Gamma(\tfrac{\alpha}{2}+1)}{\Gamma(\tfrac{\alpha}{2})^{2}}\right]
    \left[\frac{\Gamma(\tfrac{2\alpha}{3})}{\Gamma(\tfrac{2\alpha}{3}+2)}\frac{\Gamma(\tfrac{\alpha}{3}+1)^2}{\Gamma(\tfrac{\alpha}{3})^2}\right]\nonumber\\
    &\left[\frac{\Gamma(\tfrac{\alpha}{3})}{\Gamma(\tfrac{\alpha}{3}+1)}\frac{\Gamma(\tfrac{\alpha}{3}+1)}{\Gamma(\tfrac{\alpha}{3})}\right]\nonumber\\
    =&\frac{\Gamma(\alpha)\Gamma(\tfrac{2\alpha}{3})}{\Gamma(\alpha+3)\Gamma(\tfrac{2\alpha}{3}+2)}\frac{\Gamma(\tfrac{\alpha}{2}+2)\Gamma(\tfrac{\alpha}{2}+1)\Gamma(\tfrac{\alpha}{3}+1)^2}{\Gamma(\tfrac{\alpha}{2})^{2}\Gamma(\tfrac{\alpha}{3})^{2}} \nonumber\\
    \text{CS-BDeu}(\mathcal{S}',\mathcal{D};\bm{\alpha}) = &\frac{\Gamma(\alpha)}{\Gamma(\alpha+3)}\frac{\Gamma(\tfrac{\alpha}{3}+1)^3}{\Gamma(\tfrac{\alpha}{3})^3} \neq \text{CS-BDeu}(\mathcal{S},\mathcal{D};\bm{\alpha}) \label{eqn:cs_calc}
\end{align}
 
\normalsize

\section{CS-BDeu vs BDepu for Stratified Staged Trees}
\label{sec:app:csbdeu}
Assume we have a stratified staged tree $\mathcal{S}$ with $m$ levels, as in Section \ref{sec:BDeu}. As it is stratified, we have that, for the $j$th stage at level i, corresponding to $X_i$, $r_{ij} = r_i$, where $r_i$ is the size of the state space for $X_i$. We set the hyperparameters $\alpha_{ijk}$ for the CS-BDeu as described previously. We will define this recursively. First, we focus on the only stage at level 1, corresponding to the root.
\begin{align*}
    \alpha_{11k} = \tfrac{\alpha}{r_1}, k = 1,\dots,r_1
\end{align*}

Then, for the stage $u_{2j}$, which contains $h_{2j}$ stages, it has $h_{2j}$ incoming edges also., with each edge having a weight of $\tfrac{\alpha}{r_1}$. As such, we have that:
\begin{align*}
    \dot{\alpha}_{2j} = \tfrac{\alpha h_{2j} }{r_1}
\end{align*}
and so
\begin{align*}
    \alpha_{2jk} = \tfrac{\alpha h_{2j} }{r_1r_2}
\end{align*}
This is the stage hyperparameter for the edge, which means each individual edge (emanating from the constituent stages) has weight $\tfrac{\alpha}{r_1r_2}$. This relationship is true for each level. That is, for level $i$, each edge has weight $\tfrac{\alpha}{\prod_{t=1}^{i+1}r_t}$, and so for the stage $u_{ij}$ with $h_{ij}$ constituent situations and thus incoming edges, we have that:
\begin{align}
    \dot{\alpha}_{ij} &= \frac{\alpha h_{ij}}{\prod_{t=1}^{i-1}r_t},\\
    \alpha_{ijk} &= \frac{\dot{\alpha}_{ij}}{r_{i}} = \frac{\alpha h_{ij}}{\prod_{t=1}^{i}r_t}
\end{align}


We propose that this formulation of the hyperparameters is equivalent to the one we proposed in \eqref{eqn:BDepu_setting} for stratified staged trees. \par
As the staged tree is stratified, the number of leaves $L = \prod_{t=1}^{m}r_t$. For a general situation $s$ at level $i$, the number of paths it is a constituent of is $\prod_{t=i}^{m}r_t$. Hence, for a general edge $e$ emanating from $s$, the number of paths it is a constituent of is $\prod_{t=i+1}^{m}r_t$. As such, for a stage $u_j$ where all of its $h_j$ situations, labelled $s_{j}^{m}$ are at level $i$ with $r_i$ outgoing edges, the hyperparameters are:
\begin{align*}
    \alpha_{jk} &= \frac{\alpha h_{j} \prod_{t=i+1}^{m}r_t}{\prod_{t=1}^{m}r_t} =  \frac{\alpha h_{j} }{\prod_{t=1}^{i}r_t}\\
    \overline{\alpha}_j &= \sum_{k=1}^{r_i}\alpha_{jk} = \frac{\alpha h_j}{\prod_{t=1}^{i-1}r_t}
\end{align*}
as required.\par
We can see in the two formulations that, while they are equivalent for stratified staged stages, the difference is the order in which the hyperparameters are calculated. For the CS-BDeu, we propagate $\alpha$ forward from the root, from situation to outgoing edge, while for the BDepu we propagate $\alpha$ backwards from the leaves. For stratified staged trees there are no differences in either approach. 

 Hence, by the score equivalence proved for the BDepu, and the fact that the CS-BDeu and BDepu are equivalent on stratified staged trees, the CS-BDeu is score-equivalent when comparing stratified staged trees.

\section{BDepu Calculations}\label{sec:app:calcs}
The BDepu for $\mathcal{S}_1$ is:
\footnotesize
\begin{align}
    \text{BDepu}(\mathcal{S}_1) = &\left[\frac{\Gamma(12)}{\Gamma(12+2201)}\frac{\Gamma(4+885)\Gamma(8+1316)}{\Gamma(4)\Gamma(8)}\right]\nonumber\\
        &\left[\frac{\Gamma(4)}{\Gamma(4+885)}\frac{\Gamma(2+862)\Gamma(2+23)}{\Gamma(2)^2}\frac{\Gamma({8})}{\Gamma( {8+1316})}\frac{\Gamma( {4+869})\Gamma( {4+447})}{\Gamma( {4})^2}\right]\nonumber\\
        &\left[ \frac{\Gamma( {4})}{\Gamma( {4+885})}\frac{\Gamma( {4+885})}{\Gamma( {4})}\frac{\Gamma( {8})}{\Gamma( {8+1316})}\frac{\Gamma( {4+1207})\Gamma( {4+109})}{\Gamma( {4})^{2}}
       \right]\nonumber\\
        &\left[\frac{\Gamma( {2})}{\Gamma( {2+862})}\frac{\Gamma( {1 +192})\Gamma( {1+670})}{\Gamma( {1})^{2}}
        \frac{\Gamma( {2})}{\Gamma( {2+23)}}\frac{\Gamma( {1+20})\Gamma( {1+3})}{\Gamma( {1})^{2}}
        \right]\nonumber \\
        &\left[\frac{\Gamma( {2})}{\Gamma( {2+805})}\frac{\Gamma( {1+146})\Gamma( {1+659})}{\Gamma( {1})^{2}}
        \frac{\Gamma( {4})}{\Gamma( {4+109)}}\frac{\Gamma( {2+57})\Gamma( {2+52})}{\Gamma( {2})^{2}}
        \right]\nonumber \\
        &\left[\frac{\Gamma( {2})}{\Gamma( {2+402})}\frac{\Gamma( {1+296})\Gamma( {1+106})}{\Gamma( {1})^{2}}
        \right]\label{eqn:bd_s_1}
\end{align}
\normalsize
\footnotesize
The BDepu for $\mathcal{S}_2$ is:
\begin{align}
    \text{BDepu}(\mathcal{S}_2) = &\left[\frac{\Gamma(12)}{\Gamma(12+2201)}\frac{\Gamma(4+885)\Gamma(8+1316)}{\Gamma(4)\Gamma(8)}\right]\nonumber\\
        &\left[ \frac{\Gamma( {4})}{\Gamma( {4+885})}\frac{\Gamma( {4+885})}{\Gamma( {4})}\frac{\Gamma( {8})}{\Gamma( {8+1316})}\frac{\Gamma( {4+1207})\Gamma( {4+109})}{\Gamma( {4})^{2}}
       \right]\nonumber\\
         &\left[\frac{\Gamma( {4})}{\Gamma( {4+885})}\frac{\Gamma( {2+862})\Gamma( {2+23})}{\Gamma( {2})^2}\frac{\Gamma( {8})}{\Gamma( {8+1316})}\frac{\Gamma( {4+869})\Gamma( {4+447})}{\Gamma( {4})^2}\right]\nonumber\\
        &\left[\frac{\Gamma( {2})}{\Gamma( {2+862})}\frac{\Gamma( {1 +192})\Gamma( {1+670})}{\Gamma( {1})^{2}}
        \frac{\Gamma( {2})}{\Gamma( {2+23)}}\frac{\Gamma( {1+20})\Gamma( {1+3})}{\Gamma( {1})^{2}}
        \right]\nonumber \\
        &\left[\frac{\Gamma( {2})}{\Gamma( {2+805})}\frac{\Gamma( {1+146})\Gamma( {1+659})}{\Gamma( {1})^{2}}
         \frac{\Gamma( {2})}{\Gamma( {2+402})}\frac{\Gamma( {1+296})\Gamma( {1+106})}{\Gamma( {1})^{2}}
        \right]\nonumber \\
        &\left[\frac{\Gamma( {4})}{\Gamma( {4+109)}}\frac{\Gamma( {2+57})\Gamma( {2+52})}{\Gamma( {2})^{2}}
        \right]\label{eqn:bd_s_2}
\end{align}
\normalsize
\footnotesize
The BDepu for the resized subtree in $\mathcal{S}_2$ is:
\begin{align}
    &\left[ \frac{\Gamma( {4})}{\Gamma( {4+885})}\frac{\Gamma( {4+885})}{\Gamma( {4})}\frac{\Gamma( {4})}{\Gamma( {4+885})}\frac{\Gamma( {2+862})\Gamma( {2+23})}{\Gamma( {2})^2}\right]\nonumber\\
    &\left[\frac{\Gamma( {2})}{\Gamma( {2+862})}\frac{\Gamma( {1 +192})\Gamma( {1+670})}{\Gamma( {1})^{2}}
        \frac{\Gamma( {2})}{\Gamma( {2+23)}}\frac{\Gamma( {1+20})\Gamma( {1+3})}{\Gamma( {1})^{2}}
        \right]\nonumber \\
        =&\left[ \frac{\Gamma( {4})}{\Gamma( {4+885})}\frac{\Gamma( {1 +192})\Gamma( {1+670})}{\Gamma( {1})^{2}}
        \frac{\Gamma( {1+20})\Gamma( {1+3})}{\Gamma( {1})^{2}}
        \right]\nonumber \\
        =&\left[ \frac{\Gamma( {4})}{\Gamma( {4+885})}\frac{\Gamma( {1 +192})\Gamma( {1+670})\Gamma( {1+20})\Gamma( {1+3})}{\Gamma( {1})^{4}}
        \right] \label{eqn:bd_s_3}
\end{align}
\normalsize
which is equal to the BDepu contribution for the corresponding floret in $\mathcal{S}_3$.

\bibliographystyle{elsarticle-harv} 
\bibliography{biblio}






\end{document}